\documentclass[journal]{IEEEtran}
%


%

%
\usepackage{cite}

%
\ifCLASSINFOpdf
   \usepackage[pdftex]{graphicx}
   \graphicspath{{figures/}}
\else
   \usepackage[dvips]{graphicx}
\fi
\usepackage{algorithmicx}
\usepackage{algpseudocode}
\ifCLASSOPTIONcompsoc
  \usepackage[caption=false,font=normalsize,labelfont=sf,textfont=sf,subrefformat=parens]{subfig}
\else
  \usepackage[caption=false,font=footnotesize,subrefformat=parens]{subfig}
\fi
\usepackage{dblfloatfix}

\ifCLASSOPTIONcaptionsoff
  \usepackage[nomarkers]{endfloat}
 \let\MYoriglatexcaption\caption
 \renewcommand{\caption}[2][\relax]{\MYoriglatexcaption[#2]{#2}}
\fi
\usepackage{url}


\hyphenation{op-tical net-works semi-conduc-tor}
\usepackage{mathtools}
\usepackage{amssymb}
\usepackage[capitalize]{cleveref}
\crefname{equation}{}{}
\usepackage{longtable}
\usepackage{multirow}
\usepackage{bbm}
\usepackage{bm}
\usepackage{amsthm}
\usepackage{commath}
\usepackage{framed}

\newtheorem{proposition}{Proposition}
\theoremstyle{definition}

\DeclareMathOperator{\sinc}{sinc}
\newcounter{tempEquationCounter} 
\newcounter{thisEquationNumber}
\newenvironment{floatEq}
{\setcounter{thisEquationNumber}{\value{equation}}\addtocounter{equation}{1}
\begin{figure*}[!t]
\normalsize\setcounter{tempEquationCounter}{\value{equation}}
\setcounter{equation}{\value{thisEquationNumber}}
}
{\setcounter{equation}{\value{tempEquationCounter}}
\hrulefill\vspace*{4pt}
\end{figure*}
}

\begin{document}
\bstctlcite{IEEEexample:BSTcontrol}
%
\title{Uplink Channel Estimation and Data Transmission in Millimeter-Wave CRAN with Lens Antenna Arrays}
%
%
%

\author{Reuben~George~Stephen,~\IEEEmembership{Student Member,~IEEE,}
        and~Rui~Zhang,~\IEEEmembership{Fellow,~IEEE}
\thanks{Copyright \copyright 2018 IEEE. Personal use of this material is permitted. However, permission to use this material for any other purposes must be obtained from the IEEE by sending a request to pubs-permissions@ieee.org.}
\thanks{This paper was presented in part at the IEEE Global Communications Conference~(GLOBECOM), Singapore, Dec.\ 4--8, 2017.}
\thanks{This  work  was  supported  in  part  by  the  National  University  of Singapore under Research  Grant R-263-000-B62-112.}
\thanks{R. G. Stephen is with the National University of Singapore Graduate School for Integrative Sciences and Engineering, National University of Singapore, Singapore 117456, and also with the Department of Electrical and Computer Engineering, National University of Singapore, Singapore 117583 (e-mail: reubenstephen@u.nus.edu).}
\thanks{R. Zhang is with the Department of Electrical and Computer Engineering, National University of Singapore, Singapore 117583 (e-mail: elezhang@nus.edu.sg).}
}

\maketitle

\begin{abstract}
Millimeter-wave~(mmWave) communication and network densification hold great promise for achieving high-rate communication in next-generation wireless networks. 
Cloud radio access network~(CRAN), in which low-complexity remote radio heads~(RRHs) coordinated by a central unit~(CU) are deployed to serve users in a distributed manner, is a cost-effective solution to achieve network densification. However, when operating over a large bandwidth in the mmWave frequencies, the digital fronthaul links in a CRAN would be easily saturated by the large amount of sampled and quantized signals to be transferred between RRHs and the CU. To tackle this challenge, we propose in this paper a new architecture for mmWave-based CRAN with advanced lens antenna arrays at the RRHs. Due to the energy focusing property, lens antenna arrays are effective in exploiting the angular sparsity of mmWave channels, and thus help in substantially reducing the fronthaul rate and simplifying the signal processing at the multi-antenna RRHs and the CU, even when the channels are frequency-selective. We consider the uplink transmission in a mmWave CRAN with lens antenna arrays and propose a low-complexity quantization bit allocation scheme for multiple antennas at each RRH to meet the given fronthaul rate constraint. Further, we propose a channel estimation technique that exploits the energy focusing property of the lens array and can be implemented at the CU with low complexity. 
Finally, we compare the proposed mmWave CRAN using lens antenna arrays with a conventional CRAN using uniform planar arrays at the RRHs, and show that the proposed design achieves significant throughput gains, yet with much lower complexity.
\end{abstract}
\begin{IEEEkeywords}
Cloud radio access network, millimeter-wave communication, lens antenna array, channel estimation, fronthaul constraint, antenna selection, quantization bit allocation.
\end{IEEEkeywords}
%
\IEEEpeerreviewmaketitle
\section{Introduction}
%
%
%
%
\IEEEPARstart{N}{etwork} densification by increasing the densities of base stations~(BSs)/access points~(APs) deployed, and millimeter-wave~(mmWave) communication by exploiting large unused bandwidth in the higher frequencies of the radio spectrum, are two key strategies for achieving the orders of magnitude data rate improvement required for future wireless communication networks~\cite{andrews-etal2014what}. On one hand, cloud radio access network~(CRAN), in which distributed low-complexity remote radio heads~(RRHs) are deployed close to the users, and coordinated by a central unit~(CU) for joint processing, provides a cost-effective way of achieving network densification. Thanks to the centralized resource allocation and joint signal processing for the RRHs at the CU, CRAN achieves significant improvements in both the spectral efficiency and energy efficiency compared to the conventional cellular network~\cite{zhou-yu2014optimized,liu-zhang2015optimized,dai-yu2014sparse,shi-etal2014group,luo-etal2015downlink,stephen-zhang2017joint,liu-etal2015joint,stephen-zhang2017fronthaul}, while RRHs can be simple relay nodes that transmit and receive quantized/compressed baseband signals to/from the CU over their fronthaul links. On the other hand, with advances in radio frequency~(RF) circuits, wireless communication over mmWave bands has emerged as a promising technology to achieve high-rate communications, due to the large bandwidth available and the beamforming gain brought about by the possibility of deploying a large number of antennas at the transceivers thanks to the small wavelengths~\cite{bai-etal2014coverage,akdeniz-etal2014millimeter}. Thus, CRAN when integrated with mmWave communication, achieves the double goals of network densification and ample bandwidth at the same time for future wireless networks. However, in such a dense CRAN operating over a large mmWave bandwidth, the digital fronthaul links would be easily saturated by the large volume of sampled and quantized/compressed baseband signals that need to be transmitted between the CU and RRHs; thus, it is crucial to find cost-effective solutions to reduce the transmission rate required for each fronthaul link. 

In CRANs operating over the conventional cellular frequency bands, a considerable body of prior work~(e.g.~\cite{zhou-yu2014optimized,dai-yu2014sparse,liu-zhang2015optimized,shi-etal2014group}) has investigated various techniques for data compression at the RRHs to achieve fronthaul rate reduction. 
Further, channel estimation in networks with coordinated multi-antenna BSs has also been considered in prior work~\cite{hoydis-etal2011optimal,kang-etal2014joint}. 
In~\cite{hoydis-etal2011optimal}, random matrix theory was used to derive an approximate lower bound on the uplink ergodic achievable rate in a system with multiple single-antenna users and multi-antenna BSs, while~\cite{kang-etal2014joint} considered an estimate-compress-forward approach for the uplink of a CRAN and proposed various schemes to optimize the ergodic achievable sum rate subject to backhaul constraints. Most of the above techniques for compression and/or channel estimation are applicable for relatively small bandwidth compared to that in mmWave, and typically involve complex signal processing and cooperative signal compression across the RRHs, which are difficult to implement for mmWave systems due to practical cost and complexity considerations. A low-complexity training sequence design for CRAN was considered in~\cite{zhang-etal2017locally}, where the problem of minimizing the training length, while maintaining local orthogonality among the training sequences of the users, was considered; however, the fronthaul constraints at the RRHs were ignored. 
For frequency-selective mmWave channels, 
channel estimation 
for hybrid precoding was considered in~\cite{gao-etal2016channel,venugopal-etal2017channel}. The approach is to represent the channel taps in the angular domain corresponding to a set of quantized angles, and then use sparse signal processing techniques to estimate the relevant parameters. 

In this paper, we propose a new architecture for mmWave CRAN by leveraging the use of advanced \emph{lens antenna arrays}~\cite{zeng-etal2014electromagnetic,zeng-zhang2016millimeter,zeng-zhang2017cost,zeng-etal2016multiuser,kwon-etal2016rf,yang-etal2017channel} at the RRHs. A full-dimensional lens antenna array~\cite{zeng-zhang2017cost} consists of an electromagnetic~(EM) lens with energy focusing capability integrated with an antenna array whose elements are located on the focal surface of the lens~(see~\cref{F:mmWLLensArray}). The amplitude response of a lens array can be expressed as a ``sinc"-type function in terms of the angles of arrival/departure of plane waves incident on/transmitted from it, and the locations of the antenna elements~\cite{zeng-zhang2016millimeter,zeng-zhang2017cost}. 
Hence, by appropriately designing the locations of the antenna elements on the focal surface, the lens array is capable of focusing most of the energy from a uniform plane wave arriving in a particular direction onto a specific antenna element or subset of elements. Moreover, due to the multi-path sparsity of mmWave channels~\cite{akdeniz-etal2014millimeter}, a lens array can be used to achieve the capacity of a point-to-point mmWave channel with multiple antennas via a technique called path division multiplexing~\cite{zeng-zhang2016millimeter}, which uses simple single-carrier modulation even for transmission over wideband frequency-selective channels, and has low signal processing complexity. 
\begin{figure}[t]
\centering
\includegraphics[width=\columnwidth]{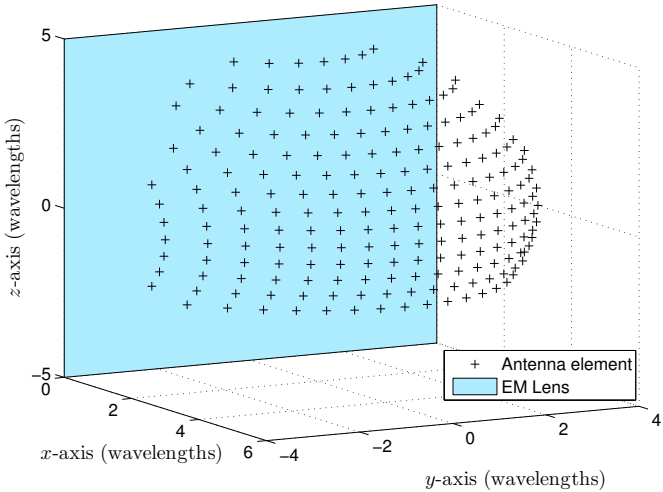}
\caption{Illustration of lens antenna array with $D_y=D_z=10,\Theta_-=\tfrac{\pi}{6}
,\Theta_+=\tfrac{\pi}{2}
,
\Phi_-=\Phi_+=\tfrac{\pi}{2}
$.}\label{F:mmWLLensArray}
\end{figure}

With lens antenna arrays, 
the angular domain\footnote{Also known as beamspace channel~\cite{brady-etal2013beamspace}.} sparsity of mmWave channels is transformed to the spatial domain, 
which then enables lower complexity signal processing for channel estimation and data transmission.  Beamspace channel estimation based on compressed sensing techniques was considered in~\cite{gao-etal2017reliable} for a mmWave system with a multi-antenna BS and multiple users. However, the channel was assumed to be frequency non-selective, unlike the general freqeuncy-selective channel in this paper. Moreover, in a CRAN, since the angles of arrival of the signals from different users are typically independent, and thus different at each RRH, the user signals are effectively separated over a small number of focusing antennas that can be selected for signal sampling and quantization. Thus, the use of lens antenna arrays can potentially achieve significant reduction in both the fronthaul rate requirement for transmitting the quantized signals to the CU and the interference among the users for the joint decoding at the CU in the uplink transmission. However, due to the finite fronthaul rate constraint at each RRH, it is crucial to design antenna selection and signal quantization schemes to maximize the achievable user rates at the CU. Since the RRHs in a CRAN are typically simple relay nodes, in this paper, we consider that simple uniform scalar quantization~(SQ) of the sampled baseband signals is performed independently over the antennas selected at each RRH. The major contributions of this paper are summarized as follows. 
\begin{itemize}
\item We introduce a new architecture for CRAN in mmWave frequencies using full-dimensional lens antenna arrays~\cite{zeng-zhang2016millimeter,zeng-zhang2017cost} at the RRHs, taking into consideration both the elevation and azimuth angles of arrival of the signals from the users in the uplink transmission. 
\item We propose a simple energy-detection based antenna selection at each RRH and a low-complexity quantization bit allocation algorithm over the selected antennas to minimize the total quantization noise power based only on the estimates of the received signal power at different antennas, subject to the fronthaul rate constraint. 
\item When the CU has perfect channel state information~(CSI), we show that the proposed system with lens antenna arrays can achieve better sum-rate performance compared to a conventional CRAN using uniform planar arrays~(UPAs) at the RRHs and orthogonal frequency division multiplexing~(OFDM) transmission by the users, when the same quantization algorithm is employed in both cases. 
\item Under imperfect CSI, we propose a \emph{reduced-size, approximate} MMSE beamforming, by exploiting the energy-focusing property of lens antenna arrays. In the proposed scheme, for each user, only the data streams where the estimated channel gains are larger than a certain threshold are chosen for beamforming, while the interference on these streams is also approximated by thresholding the channel gain estimates. 
\item With the proposed bit allocation at the RRHs, channel estimation at the CU, and data transmission with single-carrier modulation, we compare the proposed system with the conventional CRAN with UPAs via simulations, and show that when the fronthaul is constrained, the proposed system can achieve significant sum-throughput gains at much lower signal processing complexity and training overhead compared to the benchmark. 
\end{itemize}
The rest of this paper is organized as follows. In~\cref{Sec:mmWLSysMod} we present the system model for the proposed mmWave CRAN with lens arrays, and carry out an analysis of the achievable rates under perfect and imperfect CSI, with our proposed bit allocation and channel estimation schemes. \cref{Sec:mmWLUPA} gives a brief description of the benchmark CRAN system with UPAs and OFDM transmission. In~\cref{Sec:mmWLSimRes}, we compare our proposed system with the benchmark system via simulations. Finally,~\cref{Sec:mmWLSumm} concludes the paper.

{\it Notation}: In this paper, $\triangleq$ denotes equality by definition, and $\sim$ means ``distributed as". The cardinality of a finite set $\mathcal{S}$ is denoted by $|\mathcal{S}|$, while $\mathcal{A}\setminus\mathcal{B}$ denotes the elements in $\mathcal{A}$ that are not in $\mathcal{B}$. Sets of $M\times N$ real, complex, and integer matrices are denoted by $\mathbb{R}^{M\times N}$, $\mathbb{C}^{M\times N}$ and $\mathbb{Z}^{M\times N}$ respectively, while $\mathbb{R}_+$, $\mathbb{Z}_+$, 
and $\mathbb{Z}_{++}$ 
denote the set of non-negative real numbers, non-negative integers, and positive integers, respectively. 
$\delta[n]$ denotes the Kronecker delta function. 
The normalized sinc function is defined as $\sinc(x)\triangleq\tfrac{\sin(\pi x)}{\pi x}$ for $x\neq0$ and $\sinc(x)\triangleq1$ if $x=0$. 
The imaginary unit is denoted by $\jmath$ with $\jmath^2=1$. Scalars are denoted by lower-case letters, e.g. $x$, while vectors and matrices are denoted by bold-face lower-case and upper-case letters,  e.g. $\bm x$ and $\bm X$, respectively. For $x\in\mathbb{R}$, $\lceil x\rceil$ denotes the smallest integer greater than or equal to $x$, and $\lfloor x\rfloor$ denotes the largest integer less than or equal to $x$. 
For $x\in\mathbb{C}$, $|x|\geq 0$ denotes its magnitude and 
$\angle x\in[0,2\pi)$ denotes its phase in radian. 
For a vector $\bm x$, $\|\bm x\|$ denotes its Euclidean norm. A vector with all elements equal to 
$0$ is denoted by 
$\bm 0$, 
where the dimension is implied from the context. 
For vectors and matrices, $()^\mathsf{T}$ denotes transpose, and 
$()^\mathsf{H}$ denotes conjugate transpose (Hermitian). 
For matrix $\bm X$, $\mathrm{tr}(\bm X)$ denotes the sum of its diagonal elements~(trace). 
For $\bm X$ with linearly independent columns, $\bm X^\dag\triangleq(\bm X^\mathsf{H}\bm X)^{-1}\bm X^\mathsf{H}$ denotes the Moore-Penrose pseudo-inverse. $\bm X\otimes\bm Y$ denotes the Kronecker product, 
while $\bm I_N$ denotes the identity matrix of dimension $N$.  A diagonal matrix with elements $x_1,\dotsc,x_M$ on the main diagonal is denoted by $\mathrm{diag}\begin{psmallmatrix}x_1&\cdots & x_M\end{psmallmatrix}$, 
and a block diagonal matrix 
by $\mathrm{blkdiag}\begin{psmallmatrix}\bm X_1&\cdots &\bm X_M\end{psmallmatrix}$.  $\mathcal{CN}(\bm \mu,\bm\Sigma)$ denotes a circularly symmetric complex Gaussian~(CSCG) distribution centered at $\bm\mu$ with covariance $\bm\Sigma$, and $\mathcal{U}[a,b]$ denotes a uniform distribution over the interval $[a,b]$. 
\section{System Model}\label{Sec:mmWLSysMod}
Before introducing the system model, we first summarize all the notations used in this paper in~\cref{T:mmWLListSym} for ease of reference. 
\begin{table}[t]
{\caption{List of Symbols and Their Meanings}
\label{T:mmWLListSym}
\centering
\begin{tabular}{p{3cm}|p{5cm}}\hline\hline Symbol & Meaning \\ \hline\hline $\mathcal{M}=\{1,\dotsc,M\}$ & Set of RRHs\\ $\mathcal{J}=\{1,\dotsc,J\}$ & Set of sectors at each RRH\\ $\mathcal{Q}^{(j)}=\{1,\dotsc,Q\}$; respectively (resp.), $\tilde{\mathcal{Q}}^{(j)}=\{1,\dotsc,\tilde{Q}\}$& Set of antennas at sector $j$ of each RRH with lens antenna arrays (resp.\ UPAs)\\ $\mathcal{Q}=\bigcup_{j\in\mathcal{J}}\mathcal{Q}^{(j)}$ (resp.\ $\tilde{\mathcal{Q}}=\bigcup_{j\in\mathcal{J}}\tilde{\mathcal{Q}}^{(j)}$) & Set of all antennas at each RRH with lens array (resp.\ UPA)\\  $\mathcal{Q}_m$ with $|\mathcal{Q}_m|=Q_m$ (resp.\ $\tilde{\mathcal{Q}}_m$ with $|\tilde{\mathcal{Q}}_m|=\tilde{Q}_m$& Set of antennas selected at RRH $m$ with lens antenna array (resp.\ UPA)\\ $Q_\mathrm{tot}$ & Total number of selected antennas\\ $\mathcal{I}=\{1,\dotsc,Q_\mathrm{tot}\}$ & Set of all streams (selected antennas)\\$\mathcal{K}=\{1,\dotsc,K\}$ & Set of users \\ $\mathcal{L}_{m,k}=\{1,\dotsc,L_{m,k}\}$ & Set of paths from user $k$ to RRH $m$\\ $D_y\times D_z$ or $\tilde{D}_y\times \tilde{D}_z$ & Rectangular dimensions of EM lens or UPA in $y$-$z$ plane (normalized by wavelength)\\ 
$\ell^\star_{i,k}$ & Maximum gain path from user $k$ for stream $i$\\ $d_{m_i,k,\ell^\star_{i,k}}$~(resp.\ $\breve{d}_{i,k}$) & Tap delay (resp.\ estimate) of maximum gain path from user $k$ for stream $i$\\
$h_{i,k}[n]$~(resp.\ $\hat{h}_{i,k}[n]$) & Time-domain channel coefficient~(resp.\ estimate) for stream $i$ from user $k$ at time $n$\\$x_{\mathrm{d},k}[n]$ (resp.\ $x_{\mathrm{p},k}[n]$) & Time-domain data (resp.\ pilot) symbol transmitted by user $k$ at time $n$\\
$\check{y}_{\mathrm{d},i}[n]$ (resp.\ $\check{y}_{\mathrm{p},i}[n]$) & Time-domain quantized data (resp.\ pilot) signal for stream $i$ and time $n$\\ $\bar{\check{y}}_{\mathrm{d},i,k}[n]=\check{y}_{\mathrm{d},i,k}[n+d_{m_i,k,\ell^\star_{i,k}}]$~(resp.\ $\breve{\check{y}}_{\mathrm{d},i,k}[n]=\check{y}_{\mathrm{d},i,k}[n+\breve{d}_{i,k}]$) & Time-domain delay compensated quantized data signal according to delay $d_{m_i,k,\ell^\star_{i,k}}$~(resp.\ estimated delay $\breve{d}_{i,k}$) of maximum gain path from user $k$ 
\\$\bar{h}_{i,kk'}[\nu]$ (resp.\ $\breve{h}_{i,kk'}[\breve{\nu}]$) &  Sum of channel coefficients corresponding to paths $\ell'\in\mathcal{L}_{m_i,k'}$ from user $k'$ for stream $i$, which have a delay difference $\nu$~(resp.\ $\breve{\nu}$) with $d_{m_i,k,\ell^\star_{i,k}}$~(resp.\ $\breve{d}_{i,k}$)\\$\hat{\breve{h}}_{i,kk'}[\breve{\nu}]$ & Thresholded estimates of $\breve{h}_{i,kk'}[\breve{\nu}]$\\
$\tilde{h}_{i,k}[n]$~(resp.\ $\hat{\tilde{h}}_{i,k}[n]$) & Frequency-domain channel coefficient~(resp.\ estimate) for stream $i$ from user $k$ on SC $n$\\ $\tilde{x}_{\mathrm{d},k}[n]$ & Frequency-domain data symbol transmitted by user $k$ on SC $n$\\$\tilde{\check{y}}_{\mathrm{d},i}[n]$ (resp.\ $\tilde{\check{y}}_{\mathrm{p},t,i}$) & Frequency-domain quantized data (resp.\ pilot) signal for stream $i$, and SC $n$\\ 
\hline\hline
\end{tabular}
}
\end{table}
We study the uplink transmission in a mmWave-based dense CRAN cluster~(see~\cref{F:mmWLSysMod}) with $M$ multi-antenna RRHs, denoted by $\mathcal{M}=\{1,\dotsc,M\}$. The cluster is sectorized, where each RRH has $J=3$ sectors covering $120^\circ$ each, and each sector is served by a full-dimensional lens antenna array with $Q$ antenna elements denoted by the set $\mathcal{Q}^{(j)}=\left\{1,\dotsc,Q\right\},\,j\in\mathcal{J}$, where $\mathcal{J}=\{1,\dotsc,J\}$ denotes the set of sectors. 
Further, we use $\mathcal{Q}$ without the superscript to denote the set of all antenna elements in all sectors of an RRH, i.e., $\mathcal{Q}=\bigcup_{j\in\mathcal{J}}\mathcal{Q}^{(j)}$, and $|\mathcal{Q}|=JQ$. Each RRH is connected to the CU via an individual fronthaul link of finite capacity $\bar{R}_m>0$ in bits per second~(bps). There are $K$ single-antenna users in the CRAN cluster, denoted by $\mathcal{K}=\{1,\dotsc,K\}$. All the users and RRHs share the same bandwidth for communication. Depending on a user's location, its signals are incident on at most one sector of an RRH, and we treat any inter-sector/inter-cluster leakage interference as additive Gaussian noise over the bandwidth of interest. We further assume that the total number of sectors in the CRAN is greater than or equal to the number of users, i.e., $JM\geq K$.  
\begin{figure}[t]
\centering
\includegraphics[width=\columnwidth]{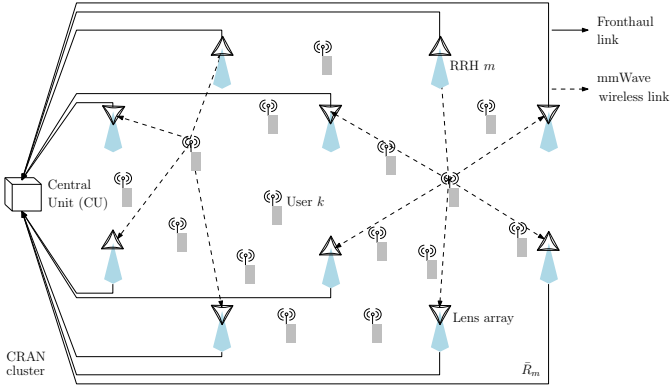}
\caption{Schematic of CRAN.}\label{F:mmWLSysMod}
\end{figure}
The users transmit over a mmWave, frequency-selective, block-fading channel of bandwidth $W$~Hz. The channel between user $k$ and RRH $m$\footnote{Since each user's signals are received by only one sector of every RRH, we refer to the channel between the user and the RRH, instead of the particular sector of the RRH, for convenience.} has $L_{m,k}$ paths denoted by $\mathcal{L}_{m,k}=\{1,\dotsc,L_{m,k}\}$. 
Then the discrete-time channel coefficient vector $\bm h_{m,k}[n]\in\mathbb{C}^{Q\times 1}$ at time index $n$ between user $k$ and RRH $m$ can be expressed using a geometric channel model as 
\begin{align}
\bm h_{m,k}[n]&=\sum_{\ell\in\mathcal{L}_{m,k}}\alpha_{m,k,\ell}\bm a\left(\theta_{m,k,\ell},\phi_{m,k,\ell}\right)\delta[n-d_{m,k,\ell}]\notag\\
&=\sum_{\ell\in\mathcal{L}_{m,k}}\bm h_{m,k,\ell}\delta[n-d_{m,k,\ell}],\label{E:mmWLChanTap}
\end{align}
where $\alpha_{m,k,\ell}\in\mathbb{C}$ and $d_{m,k,\ell}\in\{0,1,\dotsc,d_\mathrm{max}\}\triangleq\mathcal{D}$ denote, respectively, the complex gain and delay in symbol periods corresponding to path $\ell\in\mathcal{L}_{m,k}$, and $\bm a\left(\theta_{m,k,\ell},\phi_{m,k,\ell}\right)\in\mathbb{C}^{Q\times 1}$ is the array response 
for the 
elevation and azimuth angles of arrival of path $\ell$ denoted by $\theta_{m,k,l}$ and $\phi_{m,k,l}$, respectively. 

We consider that for each sector, each RRH is equipped with a rectangular EM lens in the $y-z$ plane, with dimensions $D_y\times D_z$ normalized by the wavelength\footnote{The dimensions are assumed to be same at all RRHs for convenience.} along the $y$- and $z$-axes, respectively. The EM lens is followed by a full-dimensional antenna array with $Q$ elements placed on the \emph{focal surface} of the lens, which is a hemisphere around the lens' center (taken to be the origin in~\cref{F:mmWLLensArray}) with radius equal to the focal length of the lens~(see~\cref{F:mmWLLensArray}). 
Let each antenna element be indexed by a pair of indexes $(q_{\mathrm{e}},q_{\mathrm{a}})$, where $q_\mathrm{e}$ denotes the index in the elevation direction along the focal surface of the lens and $q_\mathrm{a}$ denotes the index in the azimuth direction along the focal surface. Now, for a ray drawn from the center of the lens 
to an antenna element 
$(q_\mathrm{e},q_\mathrm{a})$,  let $\phi_{q_{\mathrm{a}}}\in[-\Phi_{-},\Phi_{+}]$ denote the azimuth angle made by the ray, where $\Phi_-,\Phi_+\in(0,\pi/2]$ are the maximum azimuth angles in the negative and positive $y$-directions, respectively. 
Similarly, let $\theta_{q_{\mathrm{e}}}\in[-\Theta_{-},\Theta_{+}]$ denote the elevation angle made by the ray, 
where $\Theta_{-},\Theta_{+}\in(0,\pi/2]$ are the maximum elevation 
angles covered by the antenna array in the negative and positive $z$-directions, 
respectively. 
Then, the antennas are placed such that the indexes $q_{\mathrm{e}}$ run from the integers $-\lfloor{D_z\sin\Theta_{-}}\rfloor$ to $\lfloor{D_z\sin\Theta_{+}}\rfloor$. Thus, the elevation angles of the antenna elements $\theta_{q_\mathrm{e}}$ are related to the indexes $q_\mathrm{e}$ by 
$\sin\theta_{q_{\mathrm{e}}}=q_{\mathrm{e}}/D_z,\,q_{\mathrm{e}}\in\{-\lfloor{D_z\sin\Theta_{-}}\rfloor,\dotsc,-1,0,1,\dotsc,\lfloor{D_z\sin\Theta_{+}}\rfloor\}$. 
Next, for each 
index $q_{\mathrm{e}}$, the index $q_\mathrm{a}$ runs from the integers $-\lfloor D_y\cos\theta_{q_{\mathrm{e}}}\sin\Phi_{-}\rfloor$ to $\lfloor D_y\cos\theta_{q_{\mathrm{e}}}\sin\Phi\rfloor$, so that the azimuth angles of the antenna elements $\phi_{q_\mathrm{a}}$ are related to the indexes $q_\mathrm{a}$ by $\sin\phi_{q_{\mathrm{a}}}=q_{\mathrm{a}}/(D_y\cos\theta_{q_{\mathrm{e}}}),\,q_{\mathrm{a}}\in\{-\lfloor D_y\cos\theta_{q_{\mathrm{e}}}\sin\Phi_{-}\rfloor,\dotsc,-1,0,1,\dotsc,\lfloor D_y\cos\theta_{q_{\mathrm{e}}}\sin\Phi\rfloor\}$\cite{zeng-zhang2017cost}. 
Then 
the amplitude response in~\eqref{E:mmWLChanTap} of the lens array element $\left(q_{\mathrm{e}},q_{\mathrm{a}}\right)\equiv q\in\mathcal{Q}^{(j)}$, to a uniform plane wave incident at elevation and azimuth angles $(\theta,\phi)$, can be expressed as~\cite{zeng-etal2016multiuser}
\begin{align}
a_q\left(\theta,\phi\right)&=\sqrt{D_zD_y}\sinc\left(q_{\mathrm{e}}-D_z\sin\theta\right)\notag\\
&\quad\cdot\sinc\left(q_{\mathrm{a}}-D_y\cos\theta\sin\phi\right).
\label{E:mmWLLARes}
\end{align}
Since the RRHs have fronthaul links of finite capacities, they must quantize the received signals before forwarding to the CU. Moreover, the RRHs are typically low-cost nodes with limited processing capability. Thus, we consider that they perform uniform SQ independently on each antenna, with 
the SQ bit allocation adapted in each channel coherence interval. 
As shown in~\cref{F:mmWLSPRRH}, the RRHs perform bit allocation based on the estimated received power levels at each antenna element, which can be obtained before converting the signals to the baseband, either using feedback from the automatic gain control~(AGC) circuitry, or by means of analog power estimators~\cite{narasimhamurthy-tepedelenlioglu2009antenna}, which can be implemented using band-pass filters and envelope detectors. 
\begin{figure}[t]
\centering
\includegraphics[width=\columnwidth]{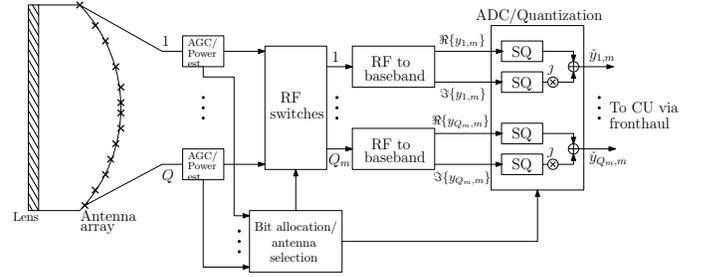}
\caption{Architecture of RRH with lens antenna array. 
}\label{F:mmWLSPRRH}
\end{figure}
We consider a frame-based transmission by the users, with frame duration $T_\mathrm{f}<T_\mathrm{c}$ in symbol periods, where $T_\mathrm{c}$ denotes the minimum coherence time among all the user-RRH channels. 
Each frame is further divided into the following three stages as shown in~\cref{F:mmWLFrameStruct}:
\begin{itemize}
\item A power probing stage of duration $T_\mathrm{a}$, where the users transmit constant amplitude signals in order to enable the RRHs to perform bit allocation and antenna selection;
\item A channel training stage of duration $T_\mathrm{p}$, where the users transmit pilot symbols, and the CU performs channel estimation for the selected antennas using the quantized signals forwarded by the RRHs; and 
\item A data transmission stage of duration $T_\mathrm{d}$, where the users transmit their data, which is quantized and forwarded by the RRHs for decoding at the CU.
\end{itemize} 
The above stages are separated by guard intervals of $d_\mathrm{max}$ symbols~(see~\cref{F:mmWLFrameStruct}), where $d_\mathrm{max}$ denotes the maximum delay spread of all the user-RRH channels. Note that $T_\mathrm{f}=T_\mathrm{a}+T_\mathrm{p}+T_\mathrm{d}+2d_\mathrm{max}$. In the following, we describe each stage in detail. 

\subsection{Uniform Scalar Quantization~(SQ) and Bit Allocation at RRHs}\label{SS:mmWLRRHProc}
During the power-probing stage, the users transmit constant amplitude signals for a duration $T_\mathrm{a}\geq d_\mathrm{max}+1$, and each RRH $m$ obtains an estimate of the average received power $\rho_{q,m}$ on each antenna either from the AGC circuitry, or using analog power estimators. We consider that each RRH performs uniform SQ independently on the real and imaginary components of the complex baseband samples $y_{q,m}[n]$ received at an antenna $q$, using $b_{q,m}\in\mathbb{Z}_{+}$ bits.\footnote{This can be performed using low-cost, low-resolution analog-to-digital converters (ADCs).}  
If $b_{q,m}=0$, the symbols on antenna $q$ are not forwarded to the CU by RRH $m$~(i.e., this antenna is not selected for subsequent channel training and data transmission). 
Following the 
design in~\cite{liu-etal2015joint}, 
the resulting quantized samples $\check{y}_{q,m}[n]$ can 
be expressed as 
\begin{align}
\check{y}_{q,m}[n]=y_{q,m}[n]+e_{q,m}[n],
\label{E:mmWLQSig}
\end{align}
where $e_{q,m}[n]$ represents the quantization error, modeled as a random variable with mean zero and variance given by~\cite{liu-etal2015joint} 
\begin{align}
\varepsilon^2_{q,m}\triangleq\mathbb{E}[|e_{q,m}[n]|^2]=3\rho_{q,m}/2^{2b_{q,m}},\quad b_{q,m}\in\mathbb{Z}_{++}.\label{E:mmWLQNoiseV}
\end{align} 
The quantization error is assumed to be uncorrelated with $y_{q,m}[n]$, and as the SQ is performed independently at each antenna and for each sample, we have 
$\mathbb{E}[e_{q,m}[n]e^*_{q',m}[n']]=0$ for any $q'\neq q$ or $n'\neq n$, $\forall m\in\mathcal{M}$. 
With Nyquist rate sampling, the transmission rate required 
to forward the quantized signals 
over all the antennas is $\sum_{q\in\mathcal{Q}}2Wb_{q,m}$ bps, which must not exceed the 
fronthaul capacity $\bar{R}_m$. 
\begin{figure}[t]
\centering
\includegraphics[width=\columnwidth]{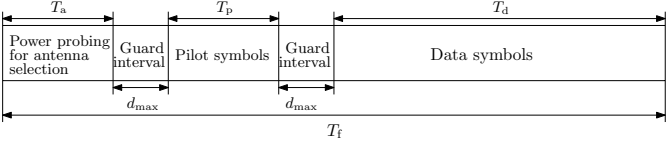}
\caption{Frame structure for uplink transmission with lens antenna arrays.}\label{F:mmWLFrameStruct}
\end{figure}
Each RRH $m$ 
uses the estimate $\rho_{q,m}$ of the average received power computed from the received signals in the power-probing stage to perform the SQ in the subsequent channel training and data transmission stages. We consider the design of quantization bit allocation 
to minimize the 
total SQ noise power over all the antennas subject to the fronthaul capacity constraint at the RRH, as captured by the following optimization problem\footnote{We assume that the objective function is defined for $b_{q,m}=0$ as well.} 
\begin{subequations}
\label{P:BA}
\begin{align}
\mathop{\mathrm{minimize}}_{\bm b_m\in\mathbb{Z}^{JQ\times 1}_+}&\,\sum_{q\in\mathcal{Q}}\frac{3\rho_{q,m}}{2^{2b_{q,m}}}\tag{\ref*{P:BA}}\\
\mathrm{subject}~\mathrm{to}&\notag\\
&\,\sum_{q\in\mathcal{Q}}b_{q,m}\leq \frac{\bar{R}_m}{2W}.\label{C:FHR}
\end{align}
\end{subequations}
Since the $b_{q,m}$'s are integers, the above problem is non-convex. However, if the variables are relaxed so that $b_{q,m}\in\mathbb{R}_+\,\forall q\in\mathcal{Q}$, we have the following relaxed problem 
\begin{subequations}
\label{P:BARel}
\begin{align}
\min_{\bm b_m\in\mathbb{R}_+^{JQ\times 1}}&\,\sum_{q\in\mathcal{Q}}\frac{3\rho_{q,m}}{2^{2b_{q,m}}}\tag{\ref*{P:BARel}}\\
\mathrm{s.t.}&\,\cref{C:FHR}\notag
\end{align}
\end{subequations}
which is convex since the objective function is convex, 
while the constraint~\eqref{C:FHR} is linear. Thus, we have the following proposition.   
\begin{proposition}\label{Prop:mmWLOptBARel}
The optimal solution to problem~\eqref{P:BARel} is given by 
\begin{align}
b'_{q,m}=\max\Big\{\frac{1}{2}\log_2\Big(\frac{6\rho_{q,m}\ln 2}{\lambda^\star}\Big),0\Big\},\,q\in\mathcal{Q},\label{E:mmWLBAMinLag}
\end{align}
where $\lambda^\star\geq 0$ is such that $\sum_{q\in\mathcal{Q}}b'_{q,m}=\bar{R}_m/2W$.
\end{proposition}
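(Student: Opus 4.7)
The plan is to solve the relaxed problem~\eqref{P:BARel} by directly applying the KKT conditions, since it is a convex problem with a differentiable objective and linear inequality constraints, so Slater's condition trivially holds (take $b_{q,m}=1$ for all $q$, assuming $\bar{R}_m/(2W)\geq JQ$; the boundary case is handled separately). The objective $f(\bm b_m)=\sum_{q\in\mathcal{Q}}3\rho_{q,m}2^{-2b_{q,m}}$ is separable and each summand is strictly convex and decreasing in $b_{q,m}$, which immediately tells me that the fronthaul constraint~\eqref{C:FHR} must be active at the optimum.

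First I would form the Lagrangian
\begin{align*}
\mathcal{L}(\bm b_m,\lambda,\bm\mu)=\sum_{q\in\mathcal{Q}}\frac{3\rho_{q,m}}{2^{2b_{q,m}}}+\lambda\Bigl(\sum_{q\in\mathcal{Q}}b_{q,m}-\frac{\bar{R}_m}{2W}\Bigr)-\sum_{q\in\mathcal{Q}}\mu_q b_{q,m},
\end{align*}
with $\lambda\geq 0$ and $\mu_q\geq 0$. Stationarity with respect to $b_{q,m}$ gives
\begin{align*}
-6\rho_{q,m}(\ln 2)\,2^{-2b_{q,m}}+\lambda-\mu_q=0,
\end{align*}
and complementary slackness requires $\mu_q b_{q,m}=0$.

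Next I would split into two cases. If $b_{q,m}>0$, then $\mu_q=0$ and solving for $b_{q,m}$ yields $b_{q,m}=\tfrac{1}{2}\log_2\bigl(6\rho_{q,m}\ln 2/\lambda\bigr)$; this is positive precisely when $6\rho_{q,m}\ln 2>\lambda$. If $b_{q,m}=0$, then the stationarity condition becomes $\mu_q=\lambda-6\rho_{q,m}\ln 2\geq 0$, i.e., $6\rho_{q,m}\ln 2\leq\lambda$. Combining both cases gives the closed-form expression $b'_{q,m}=\max\bigl\{\tfrac{1}{2}\log_2(6\rho_{q,m}\ln 2/\lambda),\,0\bigr\}$ announced in the proposition.

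Finally, I would argue that $\lambda^\star$ is determined uniquely by primal feasibility: since each $b'_{q,m}$ is a non-increasing continuous function of $\lambda$ (strictly decreasing on the active set) and the fronthaul constraint must be tight, $\lambda^\star$ is the unique value that makes $\sum_{q\in\mathcal{Q}}b'_{q,m}=\bar{R}_m/(2W)$, and can be found by a simple one-dimensional bisection. There is no real obstacle here; the only subtlety worth flagging is the handling of the non-negativity multipliers $\mu_q$, which produces the $\max\{\cdot,0\}$ water-filling structure rather than a plain inverse-power allocation.
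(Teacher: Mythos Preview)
Your proof is correct and follows essentially the same Lagrangian/KKT route as the paper. The only cosmetic difference is that the paper absorbs the non-negativity constraints into the domain of the inner minimization (writing $f(\lambda)=\min_{\bm b_m\geq\bm 0}L(\bm b_m,\lambda)$) rather than introducing explicit multipliers $\mu_q$; your treatment with complementary slackness makes the origin of the $\max\{\cdot,0\}$ term slightly more transparent, but the argument is otherwise identical.
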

\begin{proof}
Please refer to the appendix.
\end{proof}
Let $\mathcal{Q}'_m$ denote the set of antennas with non-zero allocation $b'_{q,m}$ according to~\eqref{E:mmWLBAMinLag}. 
With the optimal solution $\bm b'_m$ to problem~\eqref{P:BARel}, we proceed to construct a feasible integer solution $\tilde{\bm b}_m\in\mathbb{Z}_+^{JQ\times 1}$ for the original problem~\eqref{P:BA} by rounding $\bm b'_m$ as follows\footnote{We construct a feasible integer solution $\tilde{\bm b}_m$ such that if $b'_{q,m}=0$ for some $q\in\mathcal{Q}$ in~\eqref{E:mmWLBAMinLag}, then $\tilde{b}_{q,m}=0$ as well.}
\begin{align}
\tilde{b}_{q,m}=\begin{cases}
\lfloor b'_{q,m}\rfloor&\text{if }b'_{q,m}-\lfloor b'_{q,m}\rfloor\leq\beta\\
\lceil b'_{q,m}\rceil&\text{if }b'_{q,m}-\lfloor b'_{q,m}\rfloor>\beta\label{E:mmWLRoundBA}
\end{cases},\,q\in\mathcal{Q}'_m,
\end{align}
where $\beta\in[0,1]$ is an appropriate threshold. Notice that as $\beta$ is decreased, most of the $b'_{q,m}$'s would be rounded above, making it more difficult to satisfy the constraint in~\eqref{P:BA} and vice versa. Hence, a suitable $\beta$ can be found by bisection on the interval $[0,1]$, each time evaluating the constraint in~\eqref{P:BA} with $\tilde{\bm b}_m$ in~\eqref{E:mmWLRoundBA} and updating $\beta$ accordingly. 

From~\eqref{E:mmWLBAMinLag}, we observe that the optimal solution to problem~\eqref{P:BARel} allocates more bits to the antenna with higher estimated power $\rho_{q,m}$, which is desirable, since the antennas that receive stronger signals from the users are more likely to have useful information to be decoded. 
The algorithm is summarized in~\cref{A:BA}, and involves two instances of bisection search. The number of iterations in the first instance to find $\lambda^\star$ is $\log_2(\tfrac{6\rho_{\mathrm{max},m}\ln 2}{\epsilon})$, and the number of iterations in the second instance to find a suitable $\beta$ is $\log_2(1/\epsilon)$. The total number of iterations for the algorithm is thus $\log_2(\tfrac{6\rho_{\mathrm{max},m}\ln 2}{\epsilon^2})$. If all the $\rho_{q,m}$'s are normalized such that $6\rho_{\mathrm{max},m}\ln 2=1$, and $\epsilon\ll 1$, the total number of iterations is $2\log_2(1/\epsilon)$, which is negligible for typical values of $\epsilon$, like $10^{-9}$. Thus, based on the received power estimates in the power probing stage, each RRH $m$ selects the set of antennas denoted by $\mathcal{Q}_m\triangleq\{q\in\mathcal{Q}\mid \tilde{b}_{q,m}>0\}$ with $|\mathcal{Q}_m|\triangleq Q_m$ and forwards the received symbols on these antennas to the CU after SQ in the subsequent channel training and data transmission stages. We assume that the fronthaul capacity at each RRH is such that $\bar{R}_m\geq 2JW$, so that the number of selected antennas $Q_m$ is at least $J$, i.e.\ $Q_m\geq J,\,\forall m\in\mathcal{M}$. Then, $Q_\mathrm{tot}\geq JM$, and since $JM\geq K$, it is always feasible to recover all $K$ users' signals at the CU via linear processing with independent channels. Let $Q_\mathrm{tot}\triangleq\sum_{m\in\mathcal{M}}Q_m$ denote the total number of selected antennas in the network. For the ease of exposition, we refer to the selected antennas as ``streams" and use the index $i\in\{1,\dotsc,Q_\mathrm{tot}\}\triangleq\mathcal{I}$ to denote each stream so that the stream $i$ corresponds to the antenna $q_i^{(j)}$ in sector $j$ of RRH $m_i$. 
\begin{table}[t]
\centering
\caption{Algorithm for quantization bit allocation}\label{A:BA}
\vspace{-0.5cm}
\begin{framed}
\begin{algorithmic}[1]
\State Initialize 
$\lambda_{\mathrm{min}}=0$, $\lambda_{\mathrm{max}}=6\rho_{\mathrm{max},m}\ln 2$, tolerance $\epsilon>0$
\Repeat
\State $\lambda=(\lambda_{\mathrm{min}}+\lambda_{\mathrm{max}})/2$
\State Compute $\bm b_m$ according to~\eqref{E:mmWLBAMinLag}
\If {$\sum_{q\in\mathcal{Q}}b_{q,m}\leq\bar{R}_m/2W$}
\State Set $\lambda_{\mathrm{max}}=\lambda$
\Else
\State Set $\lambda_{\mathrm{min}}=\lambda$
\EndIf
\Until $|\lambda_{\mathrm{max}}-\lambda_{\mathrm{min}}|<\epsilon$
\State Use $\lambda^\star=\lambda$ in~\eqref{E:mmWLBAMinLag} to compute 
$\bm b'_m$
\State Initialize $\beta_{\mathrm{min}}=0$, $\beta_{\mathrm{max}}=1$, and find $\beta$ in~\eqref{E:mmWLRoundBA} by bisection similar to above until $|\beta_{\mathrm{max}}-\beta_{\mathrm{min}}|<\epsilon$
\State Compute final integer solution $\tilde{\bm b}_m$ to problem~\eqref{P:BA} according to~\eqref{E:mmWLRoundBA} with converged $\beta$ 
\end{algorithmic}
\end{framed}
\end{table}
\subsection{Path Delay Compensation and Achievable Sum-Rate with Perfect CSI}\label{SS:mmWLPDCPerfCSILens}
In this subsection, we describe the operations at the CU, assuming perfect CSI. If the RRHs are equipped with lens antenna arrays, we propose that the users transmit simultaneously with single-carrier modulation, so that the data symbols transmitted by user $k$ are given by  $x_{\mathrm{d},k}[n]=\sqrt{P}s_{\mathrm{d},k}[n]$, where $P\geq 0$ is the transmit power 
and $s_{\mathrm{d},k}[n]\sim\mathcal{CN}(0,1)$ is the complex data symbol of user $k$. 
Using~\cref{E:mmWLChanTap,E:mmWLQSig}, the quantized symbols received at the CU corresponding to stream $i\in\mathcal{I}$ can be expressed as
\begin{align}
\check{y}_{\mathrm{d},i}[n]&=\sum_{k\in\mathcal{K}}\sum_{\ell\in\mathcal{L}_{m_i,k}} h_{i,k,\ell}x_{\mathrm{d},k}[n-d_{m_i,k,\ell}]+z_{\mathrm{d},i}[n]
+e_{\mathrm{d},i}[n].
\end{align}
We assume that the CU decodes the users' data symbols via linear processing, after path delay compensation~\cite{zeng-etal2016multiuser} on the received signals for each stream as described below. Let 
$\ell^\star_{i,k}=\mathop{\arg\max}_{\ell\in\mathcal{L}_{m_i,k}}|h_{i,k,\ell}|^2$ denote the strongest path among all those arriving on stream~(selected antenna) $i\in\mathcal{I}$ from user $k$. Due to the response of the lens array, the delayed versions of a particular user's signals with different angles of arrival are focused on different antenna elements at each RRH in general. Thus, to ensure that the symbols of each user that have undergone the strongest path gain on each stream are combined at the CU in a synchronized manner, each stream of quantized symbols $\check{y}_{\mathrm{d},i}[n]$, corresponding to antenna $q_i^{(j)}$ of RRH $m_i$, is advanced by the delay $d_{m_i,k,\ell^\star_{i,k}}$ corresponding to the path $\ell^\star_{i,k}$, to obtain the \emph{delay compensated} signal $\check{y}_{\mathrm{d},i,k}\big[n+d_{m_i,k,\ell^\star_{i,k}}\big]\triangleq\bar{\check{y}}_{\mathrm{d},i,k}[n]$ for each user $k\in\mathcal{K}$, as given by
\begin{align}
&\bar{\check{y}}_{\mathrm{d},i,k}[n]\notag\\
&=\sum_{k'\in\mathcal{K}}\sum_{\ell'\in\mathcal{L}_{m_i,k'}} h_{i,k',\ell'}x_{\mathrm{d},k}\big[n-\big(d_{m_i,k',\ell'}-d_{m_i,k,\ell^\star_{i,k}}\big)\big]\notag\\&\quad+z_{\mathrm{d},i}\big[n+d_{m_i,k,\ell^\star_{i,k}}\big]+e_{\mathrm{d},i}\big[n+d_{m_i,k,\ell^\star_{i,k}}\big].\label{E:mmWLRxQSigCUAdv}
\end{align} 
In order to write the 
summation over the paths in~\eqref{E:mmWLRxQSigCUAdv} in terms of the \emph{delay differences} $\nu=\big(d_{m_i,k',\ell'}-d_{m_i,k,\ell^\star_{i,k}}\big)\in\{0,\pm 1,\dotsc,\pm d_\mathrm{max}\}\triangleq\Delta$ with the delay of the maximum gain path $d_{m_i,k,\ell^\star_{i,k}}$, we define for each antenna 
$i\in\mathcal{I}$ and user pair $k,k'\in\mathcal{K}$, the new channel coefficient 
\begin{align}
\bar{h}_{i,kk'}[\nu]
&\triangleq\sum_{\ell'\in\mathcal{L}_{m_i,k'}}h_{i,k',\ell'}\delta\big[\nu-\big(d_{m_i,k',\ell'}-d_{m_i,k,\ell^\star_{i,k}}\big)\big],\notag\\
&\quad i\in\mathcal{I},k,k'\in\mathcal{K},\nu\in\Delta,\label{E:mmWLDCChanCoeff}
\end{align}
which is equivalent to the channel coefficient~(or sum of coefficients) corresponding to the path(s) $\ell'\in\mathcal{L}_{m_i,k'}$ from user $k'$ to antenna $i$, which has~(have) a delay difference of $\nu$ 
with the maximum gain path $\ell^\star_{i,k}$ of user $k$ to the same antenna. Then,~\eqref{E:mmWLRxQSigCUAdv} can be expressed as
\begin{align}
\bar{\check{y}}_{\mathrm{d},i,k}[n]&=\bar{h}_{i,kk}[0]x_{\mathrm{d},k}[n]+\sum_{\nu\in\Delta\setminus\{0\}} \bar{h}_{i,kk}[\nu]x_{\mathrm{d},k}[n-\nu]\notag\\
&\quad+\sum_{k'\in\mathcal{K}\setminus\{k\}}\sum_{\nu\in\Delta}\bar{h}_{i,kk'}[\nu]x_{\mathrm{d},k'}[n-\nu]\notag\\
&\quad+\bar{z}_{\mathrm{d},i,k}[n]+\bar{e}_{\mathrm{d},i,k}[n],\,i\in\mathcal{I},
k\in\mathcal{K},\label{E:mmWLDCSigCU}
\end{align}
where $\bar{z}_{\mathrm{d},i,k}[n]\triangleq z_{\mathrm{d},i}\big[n+d_{m_i,k,\ell^\star_{i,k}}\big]\sim\mathcal{CN}(0,\sigma_{m_i}^2)$ and $\bar{e}_{\mathrm{d},i,k}[n]\triangleq e_{\mathrm{d},i}\big[n+d_{m_i,k,\ell^\star_{i,k}}\big]$, with $\mathbb{E}[\bar{e}_{\mathrm{d},i,k}[n]]=0$  and $\mathbb{E}[|\bar{e}_{\mathrm{d},i,k}[n]|^2]=\varepsilon_i^2$ defined in~\eqref{E:mmWLQNoiseV} denote the AWGN and quantization noise samples shifted by $d_{m_i,k,\ell^\star_{i,k}}$ symbol periods. Note that $\bar{\check{y}}_{\mathrm{d},i,k}[n]$ in~\eqref{E:mmWLDCSigCU} depends on user $k$ whose maximum gain path $\ell^\star_{i,k}$ is used as reference. The second and third terms in~\eqref{E:mmWLDCSigCU} represent user $k$'s own delayed symbols, and the interfering symbols from other users, respectively. Collecting the signals from all the 
streams, \eqref{E:mmWLDCSigCU} can be written in vector form as
\begin{align}
\bar{\check{\bm y}}_{\mathrm{d},k}[n]&=\bar{\bm h}_{kk}[0]x_{\mathrm{d},k}[n]+\sum_{\nu\in\Delta\setminus\{0\}} \bar{\bm h}_{kk}[\nu]x_{\mathrm{d},k}[n-\nu]\notag\\
&\quad+\sum_{k'\in\mathcal{K}\setminus k} \sum_{\nu\in\Delta}\bar{\bm h}_{kk'}[\nu]x_{\mathrm{d},k'}[n-\nu]+\bar{\bm z}_{\mathrm{d},k}[n]\notag\\
&\quad+\bar{\bm e}_{\mathrm{d},k}[n],
\quad k\in\mathcal{K}.
\label{E:mmWLDCSigCUV}
\end{align}
where all the vectors are of dimension $Q_\mathrm{tot}\times 1$. We consider that the CU performs linear receive beamforming on $\bar{\check{\bm y}}_k[n]$ with the beamforming vector $\bm u_k\in\mathbb{C}^{Q_{\mathrm{tot}}\times 1}$ 
to construct the estimate $\hat{x}_{\mathrm{d},k}[n]=\bm u^\mathsf{H}_k\bar{\check{\bm y}}_{\mathrm{d},k}[n]$ of user $k$'s symbol. Treating the inter-symbol and inter-user interference in~\eqref{E:mmWLDCSigCU} as Gaussian noise, the signal to interference-plus-noise ratio~(SINR) for decoding $x_{\mathrm{d},k}[n]$ is given by~\cref{E:mmWLSNRUkL} on the top of the next page.  
\begin{floatEq}
\begin{align}
\gamma_k=\frac{P\left|\bm u^\mathsf{H}_k\bar{\bm h}_{kk}[0]\right|^2}{\sum_{\nu\in\Delta\setminus\{0\}}P|\bm u^\mathsf{H}_k\bar{\bm h}_{kk}[\nu]|^2+\sum_{k'\in\mathcal{K}\setminus k} \sum_{\nu\in\Delta}P|\bm u^\mathsf{H}_k\bar{\bm h}_{kk'}[\nu]|^2+\bm u^\mathsf{H}_k(\bm\Sigma+\bm\Xi)\bm u_k},\label{E:mmWLSNRUkL}
\end{align}
\end{floatEq}
In~\eqref{E:mmWLSNRUkL}, $\bm\Sigma\triangleq\mathrm{blkdiag}\begin{psmallmatrix}
\sigma_1^2\bm I_{Q_1}&\cdots&\sigma_M^2\bm I_{Q_{M}}
\end{psmallmatrix}\in\mathbb{C}^{Q_\mathrm{tot}\times Q_\mathrm{tot}}$, 
and $\bm \Xi\triangleq\mathrm{diag}
\big(\{\varepsilon_{i}^2\}_{i=1}^{Q_\mathrm{tot}}\big)\in\mathbb{C}^{Q_\mathrm{tot}\times Q_\mathrm{tot}}$,  
where $\varepsilon^2_{i},\,i\in\mathcal{I}$, is defined in~\eqref{E:mmWLQNoiseV}. 
Since the transmit powers of the users are fixed, $\gamma_k$ in~\eqref{E:mmWLSNRUkL} 
is maximized by $\bm u_k=\bm C_k^{-1}\bar{\bm h}_{kk}[0],\,k\in\mathcal{K}$ according to the minimum mean squared error~(MMSE) criterion~\cite{tse2005fundamentals}, where 
\begin{align}
\bm C_k&\triangleq\sum_{\nu\in\Delta\setminus\{0\}}P\bar{\bm h}_{kk}[\nu]\bar{\bm h}_{kk}[\nu]^\mathsf{H}\notag\\
&\quad+\sum_{k'\in\mathcal{K}\setminus\{k\}}\sum_{\nu\in\Delta}P\bar{\bm h}_{kk'}[\nu]\bar{\bm h}_{kk'}[\nu]^\mathsf{H}+\bm\Sigma+\bm \Xi,\,k\in\mathcal{K},\label{E:mmWLMMSEC}
\end{align}
denotes the covariance of the noise and interference terms in~\eqref{E:mmWLDCSigCUV}. In this case, the SINR in~\eqref{E:mmWLSNRUkL} becomes $\gamma_k=P\bar{\bm h}_{kk}[0]^\mathsf{H}\bm C_k^{-1}\bar{\bm h}_{kk}[0]$, and by assuming the worst-case CSCG distribution for the quantization noise, a lower bound on the achievable sum rate over all users is thus given by 
\begin{align}
r_\mathrm{lens}=\sum_{k\in\mathcal{K}}\log_2(1+\gamma_k)
\end{align}
in bps/Hz. In the next subsection, we consider channel estimation at the CU, and extend the achievable rate analysis to the case with imperfect CSI.

\subsection{Channel Estimation and Achievable Sum-Rate with Imperfect CSI}\label{SS:mmWLImperfCSILens}
Since the CU does not know the CSI \emph{a priori}, it needs to estimate the CSI using pilot signals sent by the users and then quantized and forwarded by the RRHs in the channel estimation stage (see~\cref{F:mmWLFrameStruct}). The users transmit known pilot symbols given by $x_{\mathrm{p},k}[n]=\sqrt{P}s_{\mathrm{p},k}[n],\,n=0,1,\dotsc,T_\mathrm{p}-1,k\in\mathcal{K}$. 
Let $\bm h_{i,k}\triangleq\begin{bsmallmatrix}h_{i,k}[0]&\cdots & h_{i,k}[d_\mathrm{max}]\end{bsmallmatrix}^\mathsf{T}\in\mathbb{C}^{(d_\mathrm{max}+1)\times 1}$ denote the vector of time-domain channel taps from user $k$ to antenna $i\in\mathcal{I}$. 
Then, the vector of quantized symbols $\check{\bm y}_{\mathrm{p},i}\triangleq\begin{bsmallmatrix} \check{y}_{\mathrm{p},i}[0] &\cdots &\check{y}_{\mathrm{p},i}[T_\mathrm{p}-1]\end{bsmallmatrix}^\mathsf{T}\in\mathbb{C}^{T_\mathrm{p}\times 1}$ 
received at the CU during the channel estimation stage can be expressed as
\begin{align}
\check{\bm y}_{\mathrm{p},i}&=\sum_{k\in\mathcal{K}}\bm X_{\mathrm{p},k}\bm h_{i,k}+\bm z_{\mathrm{p},i}+\bm e_{\mathrm{p},i}\notag\\
&=\bm X_\mathrm{p}\bm h_i+\bm z_{\mathrm{p},i}+\bm e_{\mathrm{p},i},\,i\in\mathcal{I},
\label{E:mmWLRxSigQCUCELensV}
\end{align}
where 
\begin{align}
\bm X_{\mathrm{p},k}&\triangleq\begin{bsmallmatrix}
x_{\mathrm{p},k}[0] & 0 & \cdots & 0\\
x_{\mathrm{p},k}[1] & x_{\mathrm{p},k}[0] & \cdots & 0\\
\vdots &\vdots & \ddots &\vdots &\\
x_{\mathrm{p},k}[T_\mathrm{p}-1] & x_{\mathrm{p},k}[T_\mathrm{p}-2] &\cdots & x_{\mathrm{p},k}[T_\mathrm{p}-1-d_\mathrm{max}]\end{bsmallmatrix}\notag\\
&\in\mathbb{C}^{T_\mathrm{p}\times (d_\mathrm{max}+1)},\quad k\in\mathcal{K},\label{E:mmWLPSUsrMatLens}
\end{align}
is a Toeplitz matrix constructed from consecutive shifts of the pilot symbols of the users, 
while $\bm z_{\mathrm{p},i}\triangleq\begin{bsmallmatrix}z_{\mathrm{p},i}[0]&\cdots& z_{\mathrm{p},i}[T_\mathrm{p}-1]\end{bsmallmatrix}^\mathsf{T}$ 
with $\bm z_{\mathrm{p},i}\sim\mathcal{CN}(\bm 0,\sigma_{m_i}^2\bm I_{T_\mathrm{p}})$ denotes the AWGN, and $\bm e_{\mathrm{p},i}\triangleq\begin{bsmallmatrix}e_{\mathrm{p},i}[0]&\cdots& e_{\mathrm{p},i}[T_\mathrm{p}-1]\end{bsmallmatrix}^\mathsf{T}\in\mathbb{C}^{T_\mathrm{p}\times 1}$ 
with $\mathbb{E}[\bm e_{\mathrm{p},i}\bm e_{\mathrm{p},i}^\mathsf{H}]=\varepsilon_i^2\bm I_{T_\mathrm{p}}$ denotes the quantization noise. 
Also, $\bm X_\mathrm{p}\triangleq\begin{bsmallmatrix}\bm X_{\mathrm{p},1}&\cdots &\bm X_{\mathrm{p},K}\end{bsmallmatrix}\in\mathbb{C}^{T_\mathrm{p}\times K(d_\mathrm{max}+1)}$, 
and $\bm h_i\triangleq\begin{bsmallmatrix}\bm h_{i,1}^\mathsf{T}&\cdots &\bm h_{i,K}^\mathsf{T}\end{bsmallmatrix}^\mathsf{T}\in\mathbb{C}^{K(d_\mathrm{max}+1)\times 1}$ in~\eqref{E:mmWLRxSigQCUCELensV}. 
It is sufficient to estimate 
$\bm h_i,\,i\in\mathcal{I}$ in~\eqref{E:mmWLPSUsrMatLens}, to find both the channel coefficients and their corresponding tap delays. We do not assume any prior knowledge of the probability distribution function (pdf) of the elements of 
$\bm h_i,i\in\mathcal{I}
$ at the CU, i.e. they are treated as unknown constants. 
Then, the least-squares~(LS) estimate of $\bm h_i$ is given by 
\begin{align}
\hat{\bm h}_i&=\mathop{\arg\min}_{\bm h_i}\|\check{\bm y}_{\mathrm{p},i}-\bm X_\mathrm{p}\bm h_i\|^2\notag\\
&=\bm X_\mathrm{p}^\dag\check{\bm y}_{\mathrm{p},i}=\bm h_i+
\bm X_\mathrm{p}^\dag(\bm z_{\mathrm{p},i}+\bm e_{\mathrm{p},i}),\,i\in\mathcal{I}.
\label{E:mmWLLSEChan}
\end{align}
where we define $\bm\xi_{\mathrm{p},i}\triangleq\bm X_{\mathrm{p}}^\dag\bm z_{\mathrm{p},i}+\bm X_{\mathrm{p}}^\dag\bm e_{\mathrm{p},i}$. Since each component of the vector $\bm X_{\mathrm{p}}^\dag\bm e_{\mathrm{p},i}\in\mathbb{C}^{(d_\mathrm{max}+1)\times 1}$ is a linear combination of the $T_\mathrm{p}$ independent zero-mean random variables in $\bm e_{\mathrm{p},i}$, the second term in $\bm\xi_{\mathrm{p},i}$, 
$\bm X_{\mathrm{p}}^\dag\bm e_{\mathrm{p},i}$, can be modeled as a CSCG random vector, with mean $\mathbb{E}[\bm X_{\mathrm{p}}^\dag\bm e_{\mathrm{p},i}]=\bm 0$ and covariance $\mathbb{E}[\bm X_{\mathrm{p}}^\dag\bm e_{\mathrm{p},i}\bm e_{\mathrm{p},i}^\mathsf{H}(\bm X_{\mathrm{p}}^\dag)^\mathsf{H}]=\varepsilon_i^2(\bm X_\mathrm{p}^\mathsf{H}\bm X_\mathrm{p})^{-1}$, due to the central limit theorem. Thus, $\bm\xi_{\mathrm{p},i}\sim\mathcal{CN}\big(\bm 0,(\sigma_{m_i}^2+\varepsilon_i^2)(\bm X_\mathrm{p}^\mathsf{H}\bm X_\mathrm{p})^{-1}\big)$ in~\eqref{E:mmWLLSEChan}. Note that $T_\mathrm{p}\geq K(d_\mathrm{max}+1)$ must be satisfied 
for the solution in~\eqref{E:mmWLLSEChan} to exist. The MSE of the estimate $\hat{\bm h}_i$ is given by
\begin{align}
\mathbb{E}[\|\hat{\bm h}_i-\bm h_i\|^2]=\mathbb{E}[\|\bm\xi_{\mathrm{p},i}\|^2]=\big(\sigma_{m_i}^2+\varepsilon_i^2\big)\mathrm{tr}\big((\bm X_\mathrm{p}^\mathsf{H}\bm X_\mathrm{p})^{-1}\big),\label{E:mmWLCEMSELens}
\end{align}
which is minimized if  
$\bm X_\mathrm{p}^\mathsf{H}\bm X_\mathrm{p}=c\bm I_{K(d_\mathrm{max}+1)}
$, where $c$ is a constant~\cite[Example~4.3]{kay1993fundamentals}. From the construction of $\bm X_\mathrm{p}$, this translates to the condition 
$\bm X_{\mathrm{p},k}^\mathsf{H}\bm X_{\mathrm{p},k'}=c\delta[k-k']\bm I_{(d_\mathrm{max}+1)},\forall k,k'\in\mathcal{K}$, which is satisfied if each user's training sequence $s_{\mathrm{p},k}[n],\,n=0,\dotsc,T_\mathrm{p}-1$ is orthogonal to that of every other user's training sequence, and has the ``ideal" auto-correlation property~\cite{kay1993fundamentals}. This can be ensured by using e.g., unit-amplitude Zadoff-Chu sequences~\cite{popovic1992generalized,sesia2009lte} for $s_{\mathrm{p},k}[n]$, with equal transmit power $P,\forall k\in\mathcal{K}$. 
In this case, $c=PT_\mathrm{p}$ so that $(\bm X_{\mathrm{p},k}^\mathsf{H}\bm X_{\mathrm{p},k})^{-1}=\tfrac{1}{PT_\mathrm{p}}\bm I_{(d_\mathrm{max}+1)},\,\forall k\in\mathcal{K}$, and thus $\bm X_\mathrm{p}^\dag=\tfrac{1}{PT_\mathrm{p}}\bm X_\mathrm{p}^\mathsf{H}$. Consequently, the LS estimate in~\eqref{E:mmWLLSEChan} reduces to 
\begin{align}
\hat{\bm h}_i=\frac{1}{PT_\mathrm{p}}\bm X_\mathrm{p}^\mathsf{H}\check{\bm y}_{\mathrm{p},i}=\bm h_i+\bm\xi_{\mathrm{p},i},\label{E:mmWLLSEChanO}
\end{align} 
where $\bm\xi_{\mathrm{p},i}=\frac{1}{PT_\mathrm{p}}\bm X_\mathrm{p}^\mathsf{H}(\bm z_{\mathrm{p},i}+\bm e_{\mathrm{p},i})\sim\mathcal{CN}\Big(\bm 0,\textit{•}\tfrac{(\sigma_{m_i}^2+\varepsilon_i^2)}{PT_\mathrm{p}}\bm I_{K(d_\mathrm{max}+1)}\Big)$. According to~\eqref{E:mmWLLSEChanO}, each tap in $\bm h_i$ is estimated by correlating the received signal vector $\check{\bm y}_{\mathrm{p},i}$ with the training sequence of the corresponding user shifted by the corresponding delay of the tap. 

Due to the energy focusing property of the lens antenna array in~\eqref{E:mmWLLARes}, and the different directions of arrival of the users' signals at different RRHs, at any stream $i$, typically only one path corresponding to a particular user, and corresponding to a particular tap delay, would dominate over all its other paths. 
Thus, the magnitudes of the channel coefficients in $\bm h_i$ can be vastly different, depending on the angle of arrival of the users' signals, and directly using the estimates of all the taps in $\bm h_i,\,i\in\mathcal{I}$ according to~\eqref{E:mmWLLSEChan} may be ineffective. To overcome this issue, we propose a \emph{reduced-size, approximate} linear MMSE beamforming by exploiting the ``sinc"-type response of the lens array, where for each user, we select only the streams which contain at least one dominant estimated tap of its own, and then perform an approximate linear MMSE beamforming over these selected streams by thresholding the channel estimates, as explained below. 

Notice that for the data decoding with path delay compensation detailed in~\cref{SS:mmWLPDCPerfCSILens}, the term $\bar{h}_{i,kk}[0]$ in~\eqref{E:mmWLDCSigCU} corresponds to the channel coefficient with maximum path gain from user $k$ to antenna $i$. Now, due to the energy focusing property of the lens, the angle of arrival of user $k$'s signal is typically such that $|\bar{h}_{i,kk}[0]|\gg|\bar{h}_{i,kk}[\nu]|,\,\forall\nu\in\Delta\setminus\{0\}$, on some streams $i$. This means that the maximum gain path of the user would dominate over the other paths on a particular stream $i$, provided the antenna corresponding to stream $i$ has a response which peaks at one of the angles of arrival of the user's paths. Since we do not know the actual channel gains, we aim to find the set of such streams $\mathcal{I}_k$ for each user $k$, for which the magnitude of the maximum \emph{estimated} channel gain \emph{exceeds} a certain threshold. For this, we first estimate the tap delays of the maximum gain paths from user $k$ to every stream $i\in\mathcal{I}$. These are denoted by $\breve{d}_{i,k},\,i\in\mathcal{I}$, and computed from the LS estimate in~\eqref{E:mmWLLSEChanO} as follows
\begin{align}
\breve{d}_{i,k}\triangleq\mathop{\arg\max}_{d\in\mathcal{D}} |\hat{h}_{i,k}[d]|,\,k\in\mathcal{K},i\in\mathcal{I}.
\label{E:mmWLEstMGPDel}
\end{align}
In the above, $\hat{h}_{i,k}[d]$ is the $((k-1)(d_\mathrm{max}+1)+d+1)^\text{th}$ element of the vector LS estimate $\hat{\bm h}_i$ in~\eqref{E:mmWLLSEChanO}. Essentially, we choose the tap delay corresponding to the estimate with the largest magnitude as the estimated delay of the maximum gain path on each stream $i\in\mathcal{I}$. Then, for each stream $i\in\mathcal{I}$, the \emph{estimate} of the channel coefficient 
corresponding to the \emph{estimated} maximum gain path, is given by $\hat{h}_{i,k}[\breve{d}_{i,k}],\,i\in\mathcal{I},k\in\mathcal{K}$, which is the $((k-1)(d_\mathrm{max}+1)+\breve{d}_{i,k}+1)^\text{th}$ element of the vector LS estimate $\hat{\bm h}_i$ in~\eqref{E:mmWLLSEChanO}. 

Further, during the data transmission stage, the CU selects a set of streams $\mathcal{I}_k\subseteq\mathcal{I}$, for which the above estimated channel gains $\big|\hat{h}_{i,k}[\breve{d}_{i,k}]\big|$ corresponding to the estimated maximum gain paths, are larger than a given threshold; i.e., we define 
\begin{align}
\mathcal{I}_k\triangleq\begin{cases}\bigg\{i\in\mathcal{I}\bigg| \frac{P\big|\hat{h}_{i,k}[\breve{d}_{i,k}]\big|^2}{\sigma_{m_i}^2+\varepsilon_i^2}\geq\eta\bigg\},\\
\text{if }\exists\text{ at least one }i\in\mathcal{I}\text{ s.t. }\frac{P\big|\hat{h}_{i,k}\big[\breve{d}_{i,k}\big]\big|^2}{\sigma_{m_i}^2+\varepsilon_i^2}\geq\eta,\\
\{i_k\}\\\text{otherwise, where }i_k\triangleq\mathop{\arg\max}_{i\in\mathcal{I}}\big|\hat{h}_{i,k}\big[\breve{d}_{i,k}\big]\big|,\end{cases}\label{E:mmWLSelStream}
\end{align}
$k\in\mathcal{K}$, 
where $\eta$ is a suitable threshold. Also, let $\mathcal{I}_k=\{j_1,\dotsc,j_{I_k}\}$, and 
$|\mathcal{I}_k|=I_k$, so that $I_k\geq 1$, i.e. at least one stream is selected for decoding each user's signal. Thus, for the data transmission stage, we now have the \emph{reduced} set of 
streams $\check{y}_{\mathrm{d},i},\,i\in\mathcal{I} _k$ for each user $k$, with each $\check{y}_{\mathrm{d},i}$ given by~\eqref{E:mmWLDCSigCU}. On these set of streams, the CU performs path delay compensation as in~\cref{SS:mmWLPDCPerfCSILens}, but now with the estimated delays $\breve{d}_{i,k}$'s of the maximum gain paths instead of the true delays $d_{m_i,k,\ell^\star_{i,k}}$'s. The delay compensated signal can then be expressed in terms of the delay differences of the taps with $\breve{d}_{i,k}$, similar to~\eqref{E:mmWLDCSigCU}, as
\begin{align}
\breve{\check{y}}_{\mathrm{d},i,k}[n]&=\breve{h}_{i,kk}[\breve{0}]x_{\mathrm{d},k}[n]+\underbrace{\sum_{\breve{\nu}\in\Delta\setminus\{\breve{0}\}} \breve{h}_{i,kk}[\breve{\nu}]x_{\mathrm{d},k}[n-\breve{\nu}]}_{\text{ISI}}\notag\\
&\quad+\underbrace{\sum_{k'\in\mathcal{K}\setminus\{k\}}\sum_{\breve{\nu}\in\Delta}\breve{h}_{i,kk'}[\breve{\nu}]x_{\mathrm{d},k'}[n-\breve{\nu}]}_{\text{IUI}}\notag\\
&\quad+\breve{z}_{\mathrm{d},i,k}[n]+\breve{e}_{\mathrm{d},i,k}[n],\,i\in\mathcal{I}_k,
k\in\mathcal{K},\label{E:mmWLEstDCSigCULens}
\end{align}
where we have defined $\breve{\check{y}}_{\mathrm{d},i,k}[n]\triangleq\check{y}_{\mathrm{d},i,k}[n+\breve{d}_{i,k}]$, $\breve{z}_{\mathrm{d},i,k}[n]\triangleq z_{\mathrm{d},i}[n+\breve{d}_{i,k}]$, and $\breve{e}_{\mathrm{d},i,k}[n]\triangleq e_{\mathrm{d},i}[n+\breve{d}_{i,k}]$, and also placed a $\breve{}$ symbol over the delay differences $\breve{\nu}\in\Delta$ to indicate that they are w.r.t. the estimated tap delay $\breve{d}_{i,k}$. The channel coefficients $\breve{h}_{i,kk'}[\breve{\nu}]$ are defined similar to~\eqref{E:mmWLDCChanCoeff}, and are given by 
\begin{align}
\breve{h}_{i,kk'}[\breve{\nu}]&\triangleq\sum_{\ell'\in\mathcal{L}_{m_i,k'}}h_{i,k',\ell'}\delta\big[\breve{\nu}-\big(d_{m_i,k',\ell'}-\breve{d}_{i,k}\big)\big],\notag\\
&\quad i\in\mathcal{I}_k,\,k,k'\in\mathcal{K},\breve{\nu}\in\Delta.\label{E:mmWLEstDCChanCoeff}
\end{align}
Note that the $\breve{h}_{i,kk'}[\breve{\nu}]$'s in~\cref{E:mmWLEstDCSigCULens,E:mmWLEstDCChanCoeff} denote the \emph{true} channel coefficients that correspond to a delay difference of $\breve{\nu}$ with the \emph{estimated} delay $\breve{d}_{i,k}$. 
In vector form,~\eqref{E:mmWLEstDCSigCULens} can be written as 
\begin{align}
\breve{\check{\bm y}}_{\mathrm{d},k}[n]&=\breve{\bm h}_{kk}[\breve{0}]x_{\mathrm{d},k}[n]+\sum_{\breve{\nu}\in\Delta\setminus\{\breve{0}\}}\breve{\bm h}_{kk}[\breve{\nu}]x_{\mathrm{d},k}[n-\breve{\nu}]\notag\\
&\quad+\sum_{k'\in\mathcal{K}\setminus\{k\}}\sum_{\breve{\nu}\in\Delta}\breve{\bm h}_{kk'}[\breve{\nu}]x_{\mathrm{d},k'}[n-\breve{\nu}]\notag\\
&\quad+\breve{\bm z}_{\mathrm{d},k}[n]+\breve{\bm e}_{\mathrm{d},k}[n],\quad k\in\mathcal{K},\label{E:mmWLRxSigCULensV}
\end{align}
where all the vectors have dimensions $I_k\times 1$. Then, the \emph{reduced-size} linear MMSE beamformer $\breve{\bm u}_k\in\mathbb{C}^{I_k\times 1}$ for user $k$ after path delay compensation with the estimated tap delay of the maximum gain path on each stream $i\in\mathcal{I}_k$ is given by $\breve{\bm u}_k=\breve{\bm C}_k^{-1}\breve{\bm h}_{kk}[\breve{\nu}]$, where 
\begin{align}
\breve{\bm C}_k&\triangleq\sum_{\breve{\nu}\in\Delta\setminus\{\breve{0}\}}P\breve{\bm h}_{kk}[\breve{\nu}]\breve{\bm h}_{kk}[\breve{\nu}]^\mathsf{H}\notag\\&\quad+\sum_{k'\in\mathcal{K}\setminus\{k\}}\sum_{\breve{\nu}\in\Delta}P\breve{\bm h}_{kk'}[\breve{\nu}]\breve{\bm h}_{kk'}[\breve{\nu}]^\mathsf{H}+\breve{\bm\Sigma}+\breve{\bm \Xi},\label{E:mmWLRedCovMatEstDC}
\end{align}
is the covariance matrix of the interference terms in~\eqref{E:mmWLRxSigCULensV}, with 
$\breve{\bm\Sigma}\triangleq\mathrm{diag}\begin{psmallmatrix}\sigma_{m_{j_1}}^2&\dotsc&\sigma_{m_{j_{I_k}}}^2\end{psmallmatrix}$, and $\breve{\bm\Xi}\triangleq\begin{psmallmatrix}\varepsilon_{j_1}^2&\dotsc&\varepsilon_{j_{I_k}}^2\end{psmallmatrix}$. 

However, the CU cannot compute the beamformer $\breve{\bm u}_k$ since it only has the \emph{estimates} of $\breve{h}_{i,kk'}[\breve{\nu}]$, computed from~\eqref{E:mmWLLSEChanO}. Now, on the selected streams $i\in\mathcal{I}_k$, to avoid using the noisy estimates, the CU approximates the ISI and IUI terms in~\eqref{E:mmWLEstDCSigCULens} by thresholding the estimates  similar to~\eqref{E:mmWLSelStream}. With the thresholding, the estimates of $\breve{h}_{i,kk'}[\breve{\nu}]$, which we denote by $\hat{\breve{h}}_{i,kk'}[\breve{\nu}]$, are given by
\begin{align}
\hat{\breve{h}}_{i,kk'}[\breve{\nu}]&\triangleq\begin{cases}\hat{h}_{i,k'}[\breve{\nu}+\breve{d}_{i,k}],&\text{if }\frac{P|\hat{h}_{i,k'}[\breve{\nu}+\breve{d}_{i,k}]|^2}{\sigma_{m_i}^2+\varepsilon_i^2}\geq\eta,\\
0&\text{otherwise,}\end{cases}\label{E:mmWLEstDCEstChanCoeff}
\end{align}
$i\in\mathcal{I}_k,\,k,k'\in\mathcal{K},\breve{\nu}\in\Delta$. 
In~\eqref{E:mmWLEstDCEstChanCoeff}, $\hat{h}_{i,k'}[\breve{\nu}+\breve{d}_{i,k}]$ is the $((k'-1)(d_\mathrm{max}+1)+(\breve{\nu}+\breve{d}_{i,k})+1)^\text{th}$ element of the vector LS estimate $\hat{\bm h}_i$ in~\eqref{E:mmWLLSEChanO}. Note that due to the response of the lens antenna array, for each stream $i\in\mathcal{I}_k$, the thresholded estimates $\hat{\breve{h}}_{i,kk'}[\breve{\nu}]$ will be non-zero only for a few users $k'\in\mathcal{K}$ and delay differences $\breve{\nu}\in\Delta$. Let $\hat{\breve{\bm h}}_{kk'}[\breve{\nu}]=\begin{bsmallmatrix}\hat{\breve{h}}_{j_1,kk'}[\breve{\nu}]&\cdots &\hat{\breve{h}}_{j_{I_k},kk'}[\breve{\nu}]\end{bsmallmatrix}^\mathsf{T}\in\mathbb{C}^{I_k\times 1}$ denote the vector of the thresholded channel estimates in~\eqref{E:mmWLEstDCEstChanCoeff}. 
Then, the CU applies the \emph{reduced-size, approximate} linear MMSE beamforming $\hat{\breve{\bm u}}_k=\hat{\breve{\bm C}}_k^{-1}\hat{\breve{\bm h}}_{kk}[\breve{0}]$ on the observations $\breve{\check{\bm y}}_{\mathrm{d},k}[n]$ in~\eqref{E:mmWLRxSigCULensV}, where\footnote{It is assumed that the CU knows the quantization noise variances on each stream. These can be transmitted by the RRHs after the power probing stage to the CU once every frame period, along with information about the number of SQ bits on each antenna, since these quantities do not change
within a frame.}
\begin{align}
\hat{\breve{\bm C}}_k&\triangleq\sum_{\breve{\nu}\in\Delta\setminus\{\breve{0}\}}P\hat{\breve{\bm h}}_{kk}[\breve{\nu}]\hat{\breve{\bm h}}_{kk}[\breve{\nu}]^\mathsf{H}\notag\\
&\quad+\sum_{k'\in\mathcal{K}\setminus\{k\}}\sum_{\breve{\nu}\in\Delta}P\hat{\breve{\bm h}}_{kk'}[\breve{\nu}]\hat{\breve{\bm h}}_{kk'}[\breve{\nu}]^\mathsf{H}+\breve{\bm\Sigma}+\breve{\bm \Xi},\label{E:mmWLRedApproxEstCovMatEstDC}
\end{align} and the resulting SINR is given by 
\cref{E:mmWLSNRUkLEstCh} on the top of the next page. 
\begin{floatEq}
\begin{align}
\hat{\gamma}_k=\frac{P\big|\hat{\breve{\bm u}}^\mathsf{H}_k\breve{\bm h}_{kk}[\breve{0}]\big|^2}{\sum_{\breve{\nu}\in\Delta\setminus\{\breve{0}\}}P|\hat{\breve{\bm u}}^\mathsf{H}_k\breve{\bm h}_{kk}[\breve{\nu}]|^2+\sum_{k'\in\mathcal{K}\setminus k}\sum_{\breve{\nu}\in\Delta}P|\hat{\breve{\bm u}}^\mathsf{H}_k\breve{\bm h}_{kk'}[\breve{\nu}]|^2+\hat{\breve{\bm u}}^\mathsf{H}_k(\breve{\bm\Sigma}+\breve{\bm\Xi})\hat{\breve{\bm u}}_k}.\label{E:mmWLSNRUkLEstCh}
\end{align}
\end{floatEq}
Thus, with the proposed channel estimation, path delay compensation, and reduced-size, approximate linear MMSE receive beamforming for each user at the CU, a lower bound on the 
effective sum-throughput over all users of the CRAN is given by 
\begin{align}
\hat{r}_\mathrm{lens}=\bigg(1-\frac{T_\mathrm{a}+T_\mathrm{p}+2d_\mathrm{max}}{T_\mathrm{f}}\bigg)\sum_{k\in\mathcal{K}}\log_2(1+\hat{\gamma}_k)
\end{align}
in bps/Hz. 
In the following subsection we briefly describe a benchmark scheme in the conventional CRAN when the RRHs use UPAs and the users transmit using OFDM. 
\section{Benchmark: Conventional CRAN with UPAs and OFDM}\label{Sec:mmWLUPA}
In this case, each sector of an RRH is assumed to be equipped with a rectangular UPA of 
physical dimensions $\tilde{D}_y\times \tilde{D}_z$, 
with adjacent antenna elements of the array separated by a distance 
equal to half the wavelength. 
Let $\tilde{\mathcal{Q}}^{(j)}\triangleq\{1,\dotsc,\tilde{Q}\}$, denote the set of antennas at sector $j\in\mathcal{J}$ of each RRH in this case, and $\tilde{\mathcal{Q}}=\bigcup_{j\in\mathcal{J}}\mathcal{Q}^{(j)}$ denote the set of all antenna elements. The number of antenna elements per sector is given by $\tilde{Q}=\tilde{Q}_y\tilde{Q}_z$, where $\tilde{Q}_y=\lfloor 2\tilde{D}_y\rfloor$ denotes the number of antennas along the $y$-axis of the UPA and $\tilde{Q}_z=\lfloor 2\tilde{D}_z\rfloor$ is the number of antennas along the $z$-axis. 
For the UPA, the array response vector $\bm a(\theta,\phi)\in\mathbb{C}^{\tilde{Q}\times 1}$ in~\eqref{E:mmWLChanTap} 
can be written as 
\begin{align}
\bm a(\theta,\phi)=\bm a_{z}(\theta)\otimes\bm a_{y}(\theta,\phi),\label{E:mmWLARUPA}
\end{align} 
where $\bm a_z(\theta)=\sqrt{\tfrac{\tilde{D}_z}{\tilde{Q}_z}}\begin{bsmallmatrix}
1& \exp(\jmath\pi\sin\theta)&\cdots & \exp(\jmath\pi(\tilde{Q}_z-1)\sin\theta)
\end{bsmallmatrix}^\mathsf{T}\in\mathbb{C}^{\tilde{Q}_z\times 1}$ is the response of the linear array along the $z$-axis, and $\bm a_y(\theta,\phi)=\sqrt{\tfrac{\tilde{D}_y}{\tilde{Q}_y}}\begin{bsmallmatrix}
1& \exp(\jmath\pi\cos\theta\sin\phi)&\cdots & \exp(\jmath\pi(\tilde{Q}_y-1)\cos\theta\sin\phi)
\end{bsmallmatrix}^\mathsf{T}\in\mathbb{C}^{\tilde{Q}_y\times 1}$ is the response of the linear array along the $y$-axis. The total channel bandwidth $W$ is divided into $N$ orthogonal sub-channels~(SCs) of equal width denoted by $\mathcal{N}\triangleq\{0,1,\dotsc,N-1\}$. 

The uplink transmission protocol is similar to that in~\cref{F:mmWLFrameStruct}, except that the users transmit their symbols in $(N+\mu)$-length symbol blocks using OFDM, 
where $\mu\geq d_\mathrm{max}$ is the length of the cyclic prefix~(CP). Hence, the guard intervals shown in~\cref{F:mmWLFrameStruct} are implicit in this case due to the CP, which is discarded at the CU. We assume that all the users transmit on all SCs for fully exploiting spatial multiplexing. 
In general, there can be multiple OFDM symbols in the power probing, channel estimation, and data transmission stages. For convenience, it is assumed that $T_\mathrm{a}$, $T_\mathrm{p}$ and $T_\mathrm{d}$ are all integer multiples of $N+\mu$, i.e. there are $\tau_\mathrm{a}=\tfrac{T_\mathrm{a}}{N+\mu}\in\mathbb{Z}_{++}$, $\tau_\mathrm{p}=\frac{T_\mathrm{p}}{N+\mu}\in\mathbb{Z}_{++}$, and $\tau_\mathrm{d}=\tau_\mathrm{d}\in\mathbb{Z}_{++}$ OFDM symbols in the power probing, channel estimation, and data transmission stages of a frame, respectively. 
Let $\tilde{\bm x}_{t,k}\in\mathbb{C}^{N\times 1}$ denote the $t^\text{th}$ OFDM symbol transmitted by user $k$ in the frequency-domain. Then the corresponding time-domain signals are 
given by the inverse discrete Fourier transform~(IDFT) $\bm x_{t,k}=\bm F^\mathsf{H}\tilde{\bm x}_k$, where $\bm F\in\mathbb{C}^{N\times N}$ denotes the DFT matrix with 
columns $\bm f_n\triangleq\tfrac{1}{\sqrt{N}}\begin{bsmallmatrix}1&\exp(-\jmath\frac{2\pi n}{N})&\cdots &\exp(-\jmath\frac{2\pi n(N-1)}{N})\end{bsmallmatrix}^\mathsf{T},\,n\in\mathcal{N}$. Similar to~\cref{SS:mmWLRRHProc}, the RRHs perform antenna selection and SQ bit allocation on the received time-domain signals and forward them to the CU. Let $\tilde{\mathcal{Q}}_m\triangleq\{q\in\tilde{\mathcal{Q}}|b_{q,m}>0\}$ with $|\tilde{\mathcal{Q}}_m|=\tilde{Q}_m$ denote the set of selected antennas at each RRH $m$ in this case. Notice that with UPAs, the selected set of antennas and their non-zero quantization bit allocation at each RRH can be very different from that with the lens antenna array, due to the lack of the lens focusing. 
Let $\tilde{Q}_\mathrm{tot}=\sum_{m\in\mathcal{M}}\tilde{Q}_m$ and similar to~\cref{SS:mmWLRRHProc} we index the selected antennas~(streams) with $i\in\{1,\dotsc,\tilde{Q}_\mathrm{tot}\}\triangleq\tilde{\mathcal{I}}$, where stream index $i$ corresponds to 
antenna $q_i
$ of RRH $m_i
$. 
\subsection{Channel Estimation}\label{SS:mmWLCEUPA}
In the channel estimation stage, let $\tilde{x}_{\mathrm{p},t,k}[n]$ 
denote the frequency-domain pilot symbol transmitted by user $k\in\mathcal{K}$ on SC $n\in\mathcal{N}$ in the $t^\text{th}$ OFDM symbol, where $t\in\{0,1,\dotsc,\tau_\mathrm{p}-1\}$. Also, let $\tilde{\bm X}_{\mathrm{p},t,k}\triangleq\mathrm{diag}\begin{psmallmatrix}\tilde{x}_{\mathrm{p},t,k}[0]&\cdots &\tilde{x}_{\mathrm{p},t,k}[N-1]\end{psmallmatrix}\in\mathbb{C}^{N\times N}$ be the diagonal matrix constructed from the pilot symbols on all the SCs, and $\tilde{\bm h}_{i,k}=\sqrt{N}\bm F_0\bm h_{i,k}$, the vector of frequency-domain channel coefficients from user $k$ to antenna $i\in\mathcal{I}$, 
where $\bm h_{i,k}
\in\mathbb{C}^{(d_\mathrm{max}+1)\times 1}$ is the time-domain channel coefficient vector similar to that in~\cref{SS:mmWLImperfCSILens}, and 
$\bm F_0\in\mathbb{C}^{N\times (d_\mathrm{max}+1)}$ denotes the first $(d_\mathrm{max}+1)$ columns of $\bm F$. Then, the frequency-domain received signal vector at the CU corresponding to the OFDM symbol $t$ at antenna $i\in\tilde{\mathcal{I}}$, after removing the CP and applying the DFT can be expressed as 
\begin{align}
\tilde{\check{\bm y}}_{\mathrm{p},t,i}&=\sum_{k\in\mathcal{K}}\tilde{\bm X}_{\mathrm{p},t,k}\tilde{\bm h}_{i,k}+\tilde{\bm z}_{\mathrm{p},t,i}+\tilde{\bm e}_{\mathrm{p},t,i}\notag\\
&=\sum_{k\in\mathcal{K}}\sqrt{N}\tilde{\bm X}_{\mathrm{p},t,k}\bm F_0\bm h_{i,k}+\tilde{\bm z}_{\mathrm{p},t,i}+\tilde{\bm e}_{\mathrm{p},t,i}\notag\\
&=\bar{\bm X}_{\mathrm{p},t}\bm h_i+\tilde{\bm z}_{\mathrm{p},t,i}+\tilde{\bm e}_{\mathrm{p},t,i},\label{E:mmWLFDRxSigQCUV}
\end{align}
where $\tilde{\bm z}_{\mathrm{p},t,i}=\bm F\bm z_{\mathrm{p},t,i}$ is the DFT of the AWGN so that $\tilde{\bm z}_{\mathrm{p},t,i}\sim\mathcal{CN}(\bm 0,\sigma_{m_i}^2\bm I_N)$, 
and $\tilde{\bm e}_{\mathrm{p},t,i}=\bm F\bm e_{\mathrm{p},t,i}$ is the DFT of the quantization noise with $\mathbb{E}[\tilde{\bm e}_{\mathrm{p},t,i}]=\bm 0$ and $\mathbb{E}[\tilde{\bm e}_{\mathrm{p},t,i}\tilde{\bm e}_{\mathrm{p},t,i}^\mathsf{H}]=\varepsilon_i^2\bm I_N$. Since each element of $\tilde{\bm e}_{\mathrm{p},t,i}$ is a linear combination of the $N$ elements in $\bm e_{\mathrm{p},t,i}$, due to the central limit theorem, $\tilde{\bm e}_{\mathrm{p},t,i}\sim\mathcal{CN}(\bm 0,\varepsilon_i^2\bm I_N)$. Also, in~\eqref{E:mmWLFDRxSigQCUV}, $\bar{\bm X}_{\mathrm{p},t}\triangleq\sqrt{N}\begin{bsmallmatrix}\tilde{\bm X}_{\mathrm{p},t,1}\bm F_0&\cdots &\tilde{\bm X}_{\mathrm{p},t,K}\bm F_0\end{bsmallmatrix}\in\mathbb{C}^{N\times K(d_\mathrm{max}+1)}$, and $\bm h_i$ is defined similar to~\eqref{E:mmWLRxSigQCUCELensV}. 
For $\tau_\mathrm{p}>1$, stacking each of 
$\tilde{\check{\bm y}}_{\mathrm{p},t,i}$, 
$\bar{\bm X}_{\mathrm{p},t}$, $\tilde{\bm z}_{\mathrm{p},t,i}$, and $\tilde{\bm e}_{\mathrm{p},t,i},\,t\in\{1,\dotsc,\tau_\mathrm{p}-1\}$ 
column-wise, 
we have 
\begin{align}
\tilde{\check{\bm y}}_{\mathrm{p},i}&=\bar{\bm X}_\mathrm{p}\bm h_i+\tilde{\bm z}_{\mathrm{p},i}+\tilde{\bm e}_{\mathrm{p},i}.\label{E:mmWLFDRxSigQCUAllSym}
\end{align}
Similar to~\cref{SS:mmWLImperfCSILens}, the LS estimate for $\bm h_i$ is given by\footnote{Since both $\tilde{\bm z}_{\mathrm{p},i}$ and $\tilde{\bm e}_{\mathrm{p},i}$ are CSCG, this is also the maximum likelihood estimate of $\bm h_i$.}
\begin{align}
\hat{\bm h}_i=\bar{\bm X}_\mathrm{p}^\dag\tilde{\check{\bm y}}_\mathrm{p}=\bm h_i+\bm\xi_{\mathrm{p},i},
\label{E:mmWLLSChanEstUPA}
\end{align}
where $\bm\xi_{\mathrm{p},i}\triangleq \bar{\bm X}_\mathrm{p}^\dag(\tilde{\bm z}_{\mathrm{p},i}+\tilde{\bm e}_{\mathrm{p},i})\sim\mathcal{CN}(\bm 0,(\sigma_{m_i}^2+\varepsilon_i^2)(\bar{\bm X}_\mathrm{p}^\mathsf{H}\bar{\bm X}_\mathrm{p})^{-1})$. Note that the estimate in~\eqref{E:mmWLLSChanEstUPA} exists only if $\tau_\mathrm{p}N\geq K(d_\mathrm{max}+1)$, and 
the 
MSE of the estimate is 
\begin{align}
\mathbb{E}[\|\hat{\bm h}_i-\bm h_i\|^2]=\mathbb{E}[\|\bm\xi_{\mathrm{p},i}\|^2]=(\sigma_{m_i}^2+\varepsilon_i^2)\mathrm{tr}\big((\bar{\bm X}_\mathrm{p}^\mathsf{H}\bar{\bm X}_\mathrm{p})^{-1}\big),
\end{align}
which is minimized if $\bar{\bm X}_\mathrm{p}^\mathsf{H}\bar{\bm X}_\mathrm{p}=\tilde{c}\bm I_{K(d_\mathrm{max}+1)}$, where $\tilde{c}$ is a constant. Due to the construction of $\bar{\bm X}_\mathrm{p}$ in~\eqref{E:mmWLFDRxSigQCUAllSym}, this translates to the requirement that 
\begin{align}
\sum_{t=0}^{\tau_\mathrm{p}-1}\bar{\bm X}_{\mathrm{p},t,k}^\mathsf{H}\bar{\bm X}_{\mathrm{p},t,k'}
&=N\bm F_0^\mathsf{H}\bigg(\sum_{t=0}^{\tau_\mathrm{p}-1}\tilde{\bm X}_{\mathrm{p},t,k}^\mathsf{H}\tilde{\bm X}_{\mathrm{p},t,k'}\bigg)\bm F_0\notag\\
&=\delta[k-k']\tilde{c}\bm I_{(d_\mathrm{max}+1)}\,k,k'\in\mathcal{K}.\label{E:mmWLNecCondTSUPA}
\end{align}
Now, let $\bm\Psi\triangleq\bm V\otimes\bm S_{\mathrm{p},k}\in\mathbb{C}^{\tau_\mathrm{p}N\times\tau_\mathrm{p}N}$, where $\bm V\in\mathbb{C}^{\tau_\mathrm{p}\times\tau_\mathrm{p}}$ is a unitary matrix and $\bm S_{\mathrm{p},k}\triangleq\mathrm{diag}\begin{psmallmatrix}s_{\mathrm{p},k}[0]&\cdots&s_{\mathrm{p},k}[N-1]\end{psmallmatrix}\in\mathbb{C}^{N\times N}$ is a diagonal matrix with unit amplitude training symbols, satisfying $\bm S_{\mathrm{p},k}^\mathsf{H}\bm S_{\mathrm{p},k}=\bm I_N$. Then, $\bm\Psi^\mathsf{H}\bm\Psi=\bm V^\mathsf{H}\bm V\otimes\bm S_{\mathrm{p},k}^\mathsf{H}\bm S_{\mathrm{p},k}=\bm I_{\tau_\mathrm{p}N}$. The 
condition~\eqref{E:mmWLNecCondTSUPA} can then be ensured by choosing $\tilde{\bm X}_{\mathrm{p},t,k}=\sqrt{P}\bm\Psi_{t\omega}(\sqrt{N}\mathrm{diag}(\bm f_\kappa))$, where $\bm\Psi_{t\omega}\in\mathbb{C}^{N\times N}$ is the $(t,\omega)^\text{th}$ block of $\bm\Psi$, $t,\omega\in\{0,\dotsc,\tau_\mathrm{p}-1\}$~\cite[Theorem 2]{chi-etal2011training}. Here, $\kappa$ and $\omega$ are such that the user index $k$ can be written as $k=\kappa\tau_\mathrm{p}+\omega+1$, i.e., $\kappa=\lfloor\tfrac{k-1}{\tau_\mathrm{p}}\rfloor\in\{0,1,\dotsc,\lfloor\tfrac{K-1}{\tau_\mathrm{p}}\rfloor\}$, and $\omega=k-1-\kappa\tau_\mathrm{p}\in\{0,1,\dotsc,\tau_\mathrm{p}-1\}$. Then,~\eqref{E:mmWLNecCondTSUPA} 
holds with $\tilde{c}=\tau_\mathrm{p}NP$, and $\bar{\bm X}_\mathrm{p}^\mathsf{H}\bar{\bm X}_\mathrm{p}=\tau_\mathrm{p}NP\bm I_{K(d_\mathrm{max}+1)}$. Then, the LS estimate in~\eqref{E:mmWLLSChanEstUPA} can be written as
\begin{align}
\hat{\bm h}_i=\frac{1}{\tau_\mathrm{p}NP}\bar{\bm X}_\mathrm{p}^\mathsf{H}\tilde{\check{\bm y}}_\mathrm{p}=\bm h_i+\bm\xi_{\mathrm{p},i},
\end{align}
where $\bm\xi_{\mathrm{p},i}=\tfrac{1}{\tau_\mathrm{p}NP}\bar{\bm X}_\mathrm{p}^\mathsf{H}(\tilde{\bm z}_{\mathrm{p},i}+\tilde{\bm e}_{\mathrm{p},i})\sim\mathcal{CN}\Big(\bm 0,\tfrac{\sigma_{m_i}^2+\varepsilon_i^2}{\tau_\mathrm{p}NP}\bm I_{K(d_\mathrm{max}+1)}\Big)$. 
Note that
~\eqref{E:mmWLLSChanEstUPA} gives the time-domain channel estimates, and 
the frequency domain 
estimates $\hat{\tilde{\bm h}}_{i,k},\,i\in\tilde{\mathcal{I}}$ for each user $k\in\mathcal{K}$, are given by applying the transformation $\hat{\tilde{\bm h}}_{i,k}=\sqrt{N}\bm F_0\hat{\bm h}_{i,k},\,i\in\tilde{\mathcal{I}},k\in\mathcal{K}$, where $\hat{\bm h}_{i,k}$ is the length $(d_\mathrm{max}+1)$ sub-vector in $\hat{\bm h}_i$ starting at index $(k-1)(d_\mathrm{max}+1)+1$ and ending at $k(d_\mathrm{max}+1)$. 
\subsection{Achievable Sum-Rate}
Let $\tilde{x}_{\mathrm{d},k}[n]=\sqrt{P}s_{\mathrm{d},k}[n]$ denote the frequency-domain data symbol transmitted by user $k$ on SC $n\in\mathcal{N}$, where $s_{\mathrm{d},k}[n]\sim\mathcal{CN}(0,1)$ and $P\geq 0$. 
Then, the frequency-domain received signal at the CU on SC $n\in\mathcal{N}$ 
corresponding to stream $i\in\tilde{\mathcal{I}}$ 
can be expressed as
\begin{align}
\tilde{\check{y}}_{\mathrm{d},i}[n]&=\sum_{k\in\mathcal{K}}\tilde{h}_{i,k}[n]\tilde{x}_{\mathrm{d},k}[n]+\tilde{z}_{\mathrm{d},i}[n]+\tilde{e}_{\mathrm{d},i}[n],
\end{align} 
where $\tilde{z}_{\mathrm{d},i}[n]$ and $\tilde{e}_{\mathrm{d},i}[n]$ are the $n^\text{th}$ components of the DFTs of the AWGN and quantization noise similar to that in~\eqref{E:mmWLFDRxSigQCUV}. Collecting the received signals from all streams on each SC $n$, we have
\begin{align}
\tilde{\check{\bm y}}_\mathrm{d}[n]&=\sum_{k\in\mathcal{K}}\tilde{\bm h}_k[n]\tilde{x}_{\mathrm{d},k}[n]+\tilde{\bm z}_\mathrm{d}[n]+\tilde{\bm e}_\mathrm{d}[n],n\in\mathcal{N},\label{E:mmWLRxVCUSC}
\end{align}
where all the vectors are of dimensions $\tilde{Q}_\mathrm{tot}\times 1$. If the CU has perfect CSI, 
the 
linear 
MMSE beamforming vector $\tilde{\bm u}_{k,n}\in\mathbb{C}^{\tilde{Q}_{\mathrm{tot}}\times 1}$ that maximizes the receive SINR for each user $k$ on SC $n$, treating the interference from other users as noise, is given by 
$\tilde{\bm u}_{k,n}=\tilde{\bm C}_{k,n}^{-1}\tilde{\bm h}_{k}[n]$, where 
\begin{align}
\tilde{\bm C}_{k,n}\triangleq\sum_{k'\in\mathcal{K}\setminus\{k\}}P\tilde{\bm h}_{k'}[n]\tilde{\bm h}_{k'}[n]^\mathsf{H}+\tilde{\bm\Sigma}
+\tilde{\bm \Xi},
\end{align}
with $\tilde{\bm\Sigma}\triangleq\mathrm{blkdiag}\begin{psmallmatrix}
\sigma_1^2\bm I_{\tilde{Q}_{1}}&\cdots&\sigma_M^2\bm I_{\tilde{Q}_M}
\end{psmallmatrix}\in\mathbb{C}^{\tilde{Q}_\mathrm{tot}\times \tilde{Q}_\mathrm{tot}}$ and 
$\tilde{\bm \Xi}\triangleq\mathrm{diag}\begin{psmallmatrix}\{\varepsilon_i^2\}_{i=1}^{Q_\mathrm{tot}}\end{psmallmatrix}\in\mathbb{C}^{\tilde{Q}_\mathrm{tot}\times \tilde{Q}_\mathrm{tot}}$. The SINR for user $k$ on SC $n$ is then given by $\tilde{\gamma}_{k,n}=P\tilde{\bm h}_{k}[n]^\mathsf{H}\tilde{\bm C}_{k,n}^{-1}\tilde{\bm h}_{k}[n]$, and thus the sum-rate over all users with UPAs and OFDM transmission is given by
\begin{align}
\tilde{r}_\mathrm{upa}=\sum_{k\in\mathcal{K}}\frac{1}{N+\mu}\sum_{n\in\mathcal{N}}\log_2(1+\tilde{\gamma}_{k,n})
\end{align}
in bps/Hz.

On the other hand, with estimated CSI, the CU computes the linear MMSE receiver $\hat{\tilde{\bm u}}_{k,n}=\hat{\tilde{\bm C}}_{k,n}^{-1}\hat{\tilde{\bm h}}_k[n]$ on each SC $n$, where the matrix $\hat{\tilde{\bm C}}_{k,n}\triangleq\sum_{k'\in\mathcal{K}\setminus\{k\}}P\hat{\tilde{\bm h}}_{k'}[n]\hat{\tilde{\bm h}}_{k'}[n]^\mathsf{H}+\tilde{\bm\Sigma}+\tilde{\bm \Xi}$ and the components of $\hat{\tilde{\bm h}}_k[n],\,n\in\mathcal{N},k\in\mathcal{K}$ are computed by transforming the time-domain estimates in~\eqref{E:mmWLLSChanEstUPA} as outlined earlier. Then the 
resulting SINR is given by 
\begin{align}
\hat{\tilde{\gamma}}_{k,n}=\frac{P|\hat{\tilde{\bm u}}^\mathsf{H}_{k,n}\tilde{\bm h}_{k}[n]|^2}{\sum_{k'\in\mathcal{K}\setminus\{k\}}P|\hat{\tilde{\bm u}}_{k,n}^\mathsf{H}\tilde{\bm h}_{k'}[n]|^2+\hat{\tilde{\bm u}}_{k,n}^\mathsf{H}(\tilde{\bm\Sigma}+\tilde{\bm \Xi})\hat{\tilde{\bm u}}_{k,n}}
\end{align}
Thus, the effective sum-throughput over all users 
with UPAs and OFDM transmission 
is given by 
\begin{align}
\hat{\tilde{r}}_\mathrm{upa}=\Big(1-\frac{\tau_\mathrm{a}+\tau_\mathrm{p}}{\tau_\mathrm{f}}\Big)\frac{1}{N+\mu}\sum_{k\in\mathcal{K}}\sum_{n\in\mathcal{N}}\log_2(1+\hat{\tilde{\gamma}}_{k,n}),
\end{align}
in bps/Hz. 
\section{Simulation Results}\label{Sec:mmWLSimRes}
For the simulations, we consider a part of the general sectorized CRAN illustrated in~\cref{F:mmWLSysMod} with $M=6$ RRHs located at the corners of a hexagon of side $50/\sqrt{3}$~meters~(m), along with $K=6$ users located randomly and uniformly in the common 
region covered by one sector of each RRH as shown in~\cref{F:mmWLSimLayout}. All the RRHs are assumed to be of height $30$~meters~(m), while the users' height can vary between $1$~m and $25$~m. 
\begin{figure}[t]
\centering
\includegraphics[width=\columnwidth]{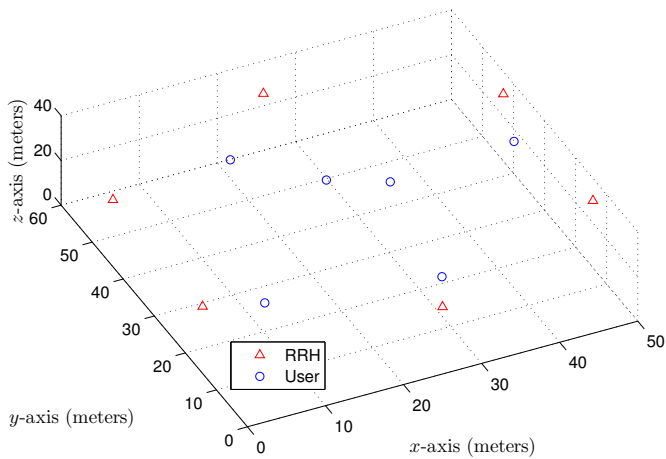}
\caption{Example of RRH and user layout for simulations.}\label{F:mmWLSimLayout}
\end{figure}
The size of the EM lens and the UPA at each RRH is chosen such that $D_y=D_z=10=\tilde{D}_y=\tilde{D}_z$, so that the aperture size is the same in both cases. The height of the RRHs is greater than the maximum height of the users, and the minimum distance along the ground between a user and RRH is assumed to be $2$~m. Based on the above setup, the maximum coverage angles for the lens array in the elevation direction are chosen to be $\Theta_+=\pi/2,\Theta_-=\pi/6$, while the corresponding angles in the azimuth direction are chosen to be $\Phi_+=\Phi_-=\pi/3$ due to the sectorization. This leads to $Q=208$ and $\tilde{Q}=400$ antennas per sector at each RRH for the lens array and the UPA, respectively. The carrier frequency is $28$~GHz and the mmWave channel shared by the users and RRHs has a bandwidth $W=200$~MHz. The maximum delay spread of the channel is $100$~ns, which translates to $d_\mathrm{max}=20$ symbols. The free-space path loss between a user $k$ and RRH $m$ separated by distance $D'_{m,k}$~m is modeled by $61.4+34.1\log_{10}D'_{m,k}$~dB~\cite{samimi-rappaport2014ultra}. 
The number of paths $L_{m,k}$ between user $k$ and RRH $m$ is equal to $1$, $2$ or $3$ with equal probability, and the time delays of the paths $\zeta_\ell,\,\ell\in\mathcal{L}_{m,k}$, are generated from an exponential distribution with mean $r_\zeta\mu_\zeta$, but truncated to $100$~ns, where $\mu_\zeta=67$~ns and $r_\zeta=0.25$~\cite{samimi-rappaport2014ultra}. The relative power levels of the paths are given by 
$\bar{\kappa}_{m,k,\ell}=\kappa_{m,k,\ell}/(\sum_{\ell'\in\mathcal{L}_{m,k}}\kappa_{m,k,\ell'})$, where $\kappa_{m,k,\ell}=10^{\frac{Z}{10}}(0.613)\exp\big(\tfrac{-\zeta_\ell}{31.4~\text{ns}}\big)$~\cite{samimi-rappaport2014ultra}, and $Z$ is zero-mean Gaussian with standard deviation $9.4$~dB, representing the per-path shadowing~\cite{samimi-rappaport2014ultra}; thus $|\alpha_{m,k,\ell}|^2=\bar{\kappa}_{m,k,\ell}P_{\mathrm{r},m,k}$, where $P_{\mathrm{r},m,k}$ in watt is the received power at RRH $m$ from user $k$ after accounting for the free-space path loss.  The phase shifts of each path $\angle\alpha_{m,k,\ell}$ are generated uniformly randomly from the interval $[0,2\pi]$. The elevation angles of arrival $\theta_{m,k,\ell}$ for each path $\ell\in\mathcal{L}_{m,k}$ from user $k\in\mathcal{K}$ to RRH $m\in\mathcal{M}$ are independent and generated as $\theta_{m,k,\ell}=\bar{\theta}_{m,k,\ell}+\mathcal{U}[-\tfrac{\pi}{12},\tfrac{\pi}{12}]$, 
where $\bar{\theta}_{m,k,\ell}$ is the line-of-sight~(LoS) angle of elevation from the user to the RRH. Similarly, the azimuth angles of arrival $\phi_{m,k,\ell}$ are independent and generated 
as $\phi_{m,k,\ell}=\bar{\phi}_{m,k,\ell}+\mathcal{U}[-\tfrac{\pi}{6},\tfrac{\pi}{6}]$, where $\bar{\phi}_{m,k,\ell}$ is the LoS angle of azimuth. The power spectral density of the AWGN is $-174$~dBm/Hz, with an additional noise figure of $6$~dB at each RRH. The inter-sector/inter-cluster interference is assumed to be equal to $-80$~dBm at all sectors. The fronthaul capacity of all RRHs are the same, i.e. $\bar{R}_m=\bar{R},\forall m\in\mathcal{M}$. 
For the benchmark scheme using OFDM with UPAs at the RRHs, the number of SCs $N=256$ and the CP length $\mu=d_\mathrm{max}=20$. The minimum coherence time of the channels is taken to be $0.4$~ms, 
while each frame is assumed to be $T_\mathrm{f}=8280$ symbols long, which corresponds to $30$ OFDM symbols. For the proposed system with lens arrays, we choose $T_\mathrm{p}=K(d_\mathrm{max}+1)$, while for the benchmark system with UPAs and OFDM, $T_\mathrm{p}=3(N+\mu)$ i.e., $\tau_\mathrm{p}=3$. In order to make a fair comparison, we assume $T_\mathrm{a}=N+d_\mathrm{max}$ in both cases. All the results are averaged over 
random user locations and 
channel realizations. 
\begin{figure}[t]
\centering
\includegraphics[width=\columnwidth]{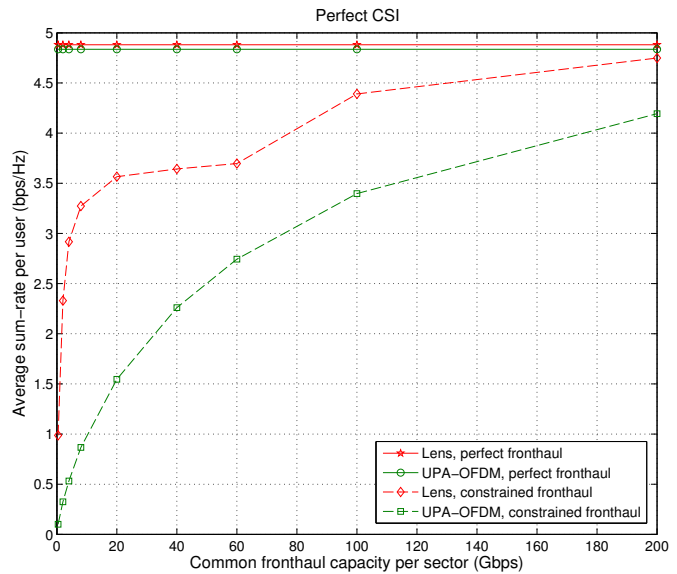}%
\caption{Average rate per user vs.\ common fronthaul capacity per RRH sector under perfect CSI, with $P=23$~dBm.}
\label{F:mmWLSEvsFHRPerfCSI}
\end{figure}

\cref{F:mmWLSEvsFHRPerfCSI,F:mmWLSEvsFHREstCSI} 
plot the average throughput per user against the common fronthaul capacity per sector $\bar{R}/3$, where the users' transmit power $P=23$~dBm. 
\cref{F:mmWLSEvsFHRPerfCSI} shows the performance assuming perfect CSI at the CU, both with and without the fronthaul constraints at the RRHs. Without fronthaul constraints and with perfect CSI, it can be observed that the performance of the benchmark system with OFDM and UPAs is almost similar to the proposed system with lens. The small difference can be attributed to the loss due to the CP insertion in the case of OFDM transmission with UPAs. As the lens array only redistributes the incident energy among the antenna elements depending on the angle of arrival of the plane waves, it is reasonable to expect that it does not offer a gain in the achievable rate over UPA-OFDM when there is no constraint on the fronthaul. However, since only single-carrier transmission by the users is required, and the ISI and IUI can be mitigated without CP insertion and IDFT/DFT processing, the use of lens arrays can greatly simplify the signal processing required, when compared with UPA-OFDM. For the system with UPA-OFDM, the overall complexity of performing linear MMSE beamforming for all the users on all the SCs with no fronthaul constraint is of the order $O(NK(MJ\tilde{Q})^3)$, while for the proposed system with lens arrays, the complexity is $O(K(MJQ)^3)$, offering a reduction of a factor of $N$. 

However, when the fronthaul is constrained, it can be observed from~\cref{F:mmWLSEvsFHRPerfCSI} that the proposed system with lens arrays 
provides significant throughput gains over the conventional UPA-OFDM. This is due to the energy focusing property of the EM lens, which makes the antenna selection and quantization bit allocation much more effective with lens arrays compared to UPAs. In the antenna selection and bit allocation scheme, the antennas with larger received power are selected and allocated 
more bits compared to the antennas with lower received power. 
\cref{T:mmWLNSelAntpRRH} compares the number of antennas selected at the RRHs in both cases. 
\begin{table}[t]
\renewcommand{\arraystretch}{1.3}
\caption{Average number of selected antennas per RRH}
\label{T:mmWLNSelAntpRRH}
\centering
\begin{tabular}{|p{1.8cm}|c|c|c|c|c|c|c|c|c|} \hline Fronthaul capacity per sector~(Gbps) & 0.4 & 2 & 4 & 8 & 20 & 40 & 60 & 100 & 200 \\ \hline Lens & 2 & 5 & 10 & 19 & 48 & 98 & 148 & 208 & 208\\ UPA-OFDM & 2 & 6 & 11 & 21 & 51 & 101 & 151 & 251 & 400\\ \hline
\end{tabular}
\end{table} 
As the EM lens focuses the incident energy onto a few antenna elements, the selection and bit allocation algorithm tends to select these elements with large received power. Thus, although the number of antennas selected in both cases is more or less similar when the fronthaul capacity is low, the selection and bit allocation is much more effective with lens arrays when compared to UPA-OFDM. With UPA-OFDM, the received energy distribution is more or less similar across the antenna elements and 
the selection and bit allocation can 
thus adversely affect the achievable rate compared to the case with no fronthaul constraint. 
\begin{figure}[t]
\centering
\includegraphics[width=\columnwidth]{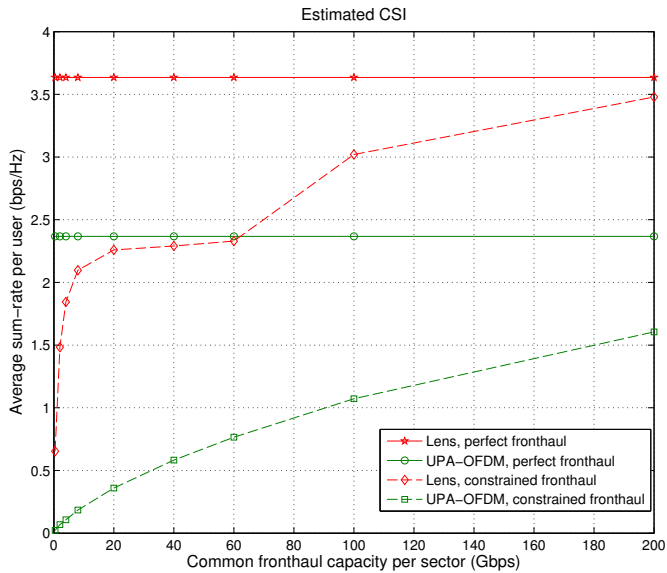}%
\caption{Average rate per user vs.\ common fronthaul capacity per RRH sector under estimated CSI, with $P=23$~dBm.}
\label{F:mmWLSEvsFHREstCSI}
\end{figure}
\cref{F:mmWLSEvsFHREstCSI} shows the performance of the proposed system with estimated CSI. For the thresholding of the channel estimates as given in~\cref{E:mmWLSelStream,E:mmWLEstDCEstChanCoeff}, we choose $\eta=3$~dB. Again, it can be observed that with the proposed channel estimation scheme via thresholding the LS channel estimates, the system with lens antenna arrays offers significant gains over the benchmark UPA-OFDM system. This is because the quantization and antenna selection at the RRHs adversely affects both the channel estimation as well as data transmission in the benchmark system, when the fronthaul is constrained. On the other hand, due to the energy focusing property of the lens, the stronger channels are relatively less affected by the quantization, and since the channels between the users and RRHs are sparse in the angular domain, and the lens converts this angular domain sparsity to the spatial domain, it is sufficient to use only the estimates of the stronger channels to retain most of the gains in the subsequent MMSE beamforming. 
In the UPA-OFDM system, such a thresholding is not effective since the energy is distributed uniformly over the antenna elements, and there is no way of distinguishing the significant channels in the spatial domain. \cref{T:mmWLNStreamspUsr} shows the average number of streams $I_k$ selected after thresholding the channel estimates for the proposed reduced, approximate MMSE beamforming with lens arrays at the RRHs. It can be observed that the total number of streams can potentially be even an order of magnitude less than the total number of streams in the UPA-OFDM case~(see \cref{T:mmWLNSelAntpRRH}), and this leads to a huge reduction in complexity of the receive beamforming in CRAN, since for the proposed system the complexity is only $O(KI_\mathrm{max}^3)$, where $I_\mathrm{max}=\max_{k\in\mathcal{K}}I_k$ is the maximum number of streams for a user, while for the benchmark with UPA-OFDM, it is $O(NK\tilde{Q}_\mathrm{tot}^3)$. Moreover, this gain in throughput is achieved at a much lower training overhead of $K(d_\mathrm{max}+1)$ for the proposed system, compared to at least $N+d_\mathrm{max}$ for the benchmark, where in practice $N\gg K$ as usual. 
\begin{table}[t]
\renewcommand{\arraystretch}{1.3}
\caption{Average number of streams selected per user for reduced MMSE with estimated CSI for lens antenna array}
\label{T:mmWLNStreamspUsr}
\centering
\begin{tabular}{|p{1.8cm}|c|c|c|c|c|c|c|c|c|} \hline Fronthaul capacity per sector~(Gbps) & 0.4 & 2 & 4 & 8 & 20 & 40 & 60 & 100 & 200 \\ \hline 
Number of streams $I_k$ & 1 & 2 & 2 & 2 & 2 & 2 & 3 & 3 & 4 \\ \hline \end{tabular}
\end{table}
\section{Conclusion}\label{Sec:mmWLSumm}
In this paper, we introduce a new architecture for mmWave CRAN with lens antenna arrays at the RRHs. We propose a low-complexity bit allocation scheme for SQ at the RRHs based only on the estimated received power levels of the antennas. We show that when the fronthaul is constrained, 
the proposed system which uses single-carrier transmission by the users combined with path delay compensation and reduced-size MMSE beamforming, can achieve large throughput gains over the conventional system that uses UPAs at the RRHs along with OFDM transmission by the users and full-size MMSE combining on each SC. Further, we propose a simple, yet effective channel estimation scheme for the proposed system, which exploits the unique energy focusing property of the lens array, and show that the 
proposed system still offers considerable advantage over the benchmark system, when both have imperfect CSI. 
Moreover, these gains are achieved at a much lower signal processing complexity at the CU, since it is sufficient to process a considerably fewer number of streams for the proposed system when compared to the benchmark. This shows that the proposed system with lens antenna arrays is a promising candidate for future evolution of CRAN operating in the mmWave frequencies. 

%

%
\appendix
As problem~\eqref{P:BARel} is convex and is strictly feasible, 
it can be optimally solved by solving its dual problem. Let $\lambda$ denote the Lagrange multiplier for constraint~\eqref{C:FHR} in problem~\eqref{P:BARel}. 
The Lagrangian is then given by
\begin{align}
L(\bm b_m,\lambda)=\sum_{q\in\mathcal{Q}}\frac{3\rho_{q,m}}{2^{2b_{q,m}}}+\lambda\Big(\sum_{q\in\mathcal{Q}}b_{q,m}-\frac{\bar{R}_m}{2W}\Big),\label{E:mmWLLagBA}
\end{align}
and the dual function is given by $f\left(\lambda\right)=\min_{\bm b_m\geq \bm 0}L(\bm b_m,\lambda)$. 
Differentiating~\eqref{E:mmWLLagBA} with respect to each $b_{q,m}$ and setting the derivative equal to zero, we have 
\begin{align}
b_{q,m}=\frac{1}{2}\log_2\Big(\frac{6\rho_{q,m}\ln 2}{\lambda}\Big).
\label{E:mmWLBALambda}
\end{align}
Since $b_{q,m}\geq 0\,\forall q\in\mathcal{Q}$, the dual variable $\lambda$ must lie in the interval $[0,6\rho_{\mathrm{max},m}\ln 2]$, 
where $\rho_{\mathrm{max},m}=\max_{q\in\mathcal{Q}}\rho_{q,m}$. Moreover, according to~\eqref{E:mmWLBAMinLag}, as $\lambda$ increases, the $b_{q,m}$'s decrease, making the constraint~\eqref{C:FHR} 
more feasible, and vice versa. Thus, the optimal 
$\lambda^\star$ that solves the dual problem $\max_{\lambda\geq 0} f(\lambda)$, can be found by a simple bisection search over $[0,6\rho_{\mathrm{max},m}\ln 2]$; 
then the optimal solution $\bm b'_m$ for 
problem~\eqref{P:BARel} is obtained 
as given by~\eqref{E:mmWLBAMinLag}. The proof of~\cref{Prop:mmWLOptBARel} is thus completed. 
%
%

\ifCLASSOPTIONcaptionsoff
  \newpage
\fi



\bibliographystyle{IEEEtran}
\bibliography{IEEEabrv,bibJournalList,ThesisBibliography}

\begin{thebibliography}{10}
\providecommand{\url}[1]{#1}
\csname url@samestyle\endcsname
\providecommand{\newblock}{\relax}
\providecommand{\bibinfo}[2]{#2}
\providecommand{\BIBentrySTDinterwordspacing}{\spaceskip=0pt\relax}
\providecommand{\BIBentryALTinterwordstretchfactor}{4}
\providecommand{\BIBentryALTinterwordspacing}{\spaceskip=\fontdimen2\font plus
\BIBentryALTinterwordstretchfactor\fontdimen3\font minus
  \fontdimen4\font\relax}
\providecommand{\BIBforeignlanguage}[2]{{%
\expandafter\ifx\csname l@#1\endcsname\relax
\typeout{** WARNING: IEEEtran.bst: No hyphenation pattern has been}%
\typeout{** loaded for the language `#1'. Using the pattern for}%
\typeout{** the default language instead.}%
\else
\language=\csname l@#1\endcsname
\fi
#2}}
\providecommand{\BIBdecl}{\relax}
\BIBdecl

\bibitem{andrews-etal2014what}
J.~Andrews \emph{et~al.}, ``What will {5G} be?'' \emph{{IEEE} J. Sel. Areas
  Commun.}, vol.~32, no.~6, pp. 1065--1082, June 2014.

\bibitem{zhou-yu2014optimized}
Y.~Zhou and W.~Yu, ``Optimized backhaul compression for uplink cloud radio
  access network,'' \emph{IEEE {J.} Sel.\ Areas Commun.}, vol.~32, no.~6, pp.
  1295--1307, June 2014.

\bibitem{liu-zhang2015optimized}
L.~Liu and R.~Zhang, ``Optimized uplink transmission in multi-antenna {C-RAN}
  with spatial compression and forward,'' \emph{{IEEE} Trans. Signal Process.},
  vol.~63, no.~19, pp. 5083--5095, Oct. 2015.

\bibitem{dai-yu2014sparse}
B.~Dai and W.~Yu, ``Sparse beamforming and user-centric clustering for downlink
  cloud radio access network,'' \emph{IEEE Access}, vol.~2, pp. 1326--1339,
  2014.

\bibitem{shi-etal2014group}
Y.~Shi, J.~Zhang, and K.~Letaief, ``Group sparse beamforming for green
  cloud-{RAN},'' \emph{{IEEE} Trans. Wireless Commun.}, vol.~13, no.~5, pp.
  2809--2823, May 2014.

\bibitem{luo-etal2015downlink}
S.~Luo, R.~Zhang, and T.~J. Lim, ``Downlink and uplink energy minimization
  through user association and beamforming in {C-RAN},'' \emph{{IEEE} Trans.
  Wireless Commun.}, vol.~14, no.~1, pp. 494--508, Jan. 2015.

\bibitem{stephen-zhang2017joint}
R.~G. Stephen and R.~Zhang, ``Joint millimeter-wave fronthaul and {OFDMA}
  resource allocation in ultra-dense {CRAN},'' \emph{{IEEE} Trans. Commun.},
  vol.~65, no.~3, pp. 1411--1423, Mar. 2017.

\bibitem{liu-etal2015joint}
L.~Liu, S.~Bi, and R.~Zhang, ``Joint power control and fronthaul rate
  allocation for throughput maximization in {OFDMA}-based cloud radio access
  network,'' \emph{{IEEE} Trans. Commun.}, vol.~63, no.~11, pp. 4097--4110,
  Nov. 2015.

\bibitem{stephen-zhang2017fronthaul}
R.~G. Stephen and R.~Zhang, ``Fronthaul-limited uplink {OFDMA} in ultra-dense
  {CRAN} with hybrid decoding,'' \emph{{IEEE} Trans. Veh. Technol.}, vol.~66,
  no.~10, pp. 9074--9084, Oct. 2017.

\bibitem{bai-etal2014coverage}
T.~Bai, A.~Alkhateeb, and R.~W. Heath, ``Coverage and capacity of
  millimeter-wave cellular networks,'' \emph{{IEEE} Commun. Mag.}, vol.~52,
  no.~9, pp. 70--77, Sep. 2014.

\bibitem{akdeniz-etal2014millimeter}
M.~R. Akdeniz \emph{et~al.}, ``Millimeter wave channel modeling and cellular
  capacity evaluation,'' \emph{{IEEE} J. Sel. Areas Commun.}, vol.~32, no.~6,
  pp. 1164--1179, June 2014.

\bibitem{hoydis-etal2011optimal}
J.~Hoydis, M.~Kobayashi, and M.~Debbah, ``Optimal channel training in uplink
  network {MIMO} systems,'' \emph{{IEEE} Trans. Signal Process.}, vol.~59,
  no.~6, pp. 2824--2833, Jun. 2011.

\bibitem{kang-etal2014joint}
J.~Kang, O.~Simeone, J.~Kang, and S.~S. Shitz, ``Joint signal and channel state
  information compression for the backhaul of uplink network mimo systems,''
  \emph{{IEEE} Trans. Wireless Commun.}, vol.~13, no.~3, pp. 1555--1567, Mar.
  2014.

\bibitem{zhang-etal2017locally}
J.~Zhang, X.~Yuan, and Y.~J. Zhang, ``Locally orthogonal training design for
  cloud-{RANs} based on graph coloring,'' \emph{{IEEE} Trans. Wireless
  Commun.}, vol.~16, no.~10, pp. 6426--6437, Oct. 2017.

\bibitem{gao-etal2016channel}
Z.~Gao, C.~Hu, L.~Dai, and Z.~Wang, ``Channel estimation for millimeter-wave
  massive {MIMO} with hybrid precoding over frequency-selective fading
  channels,'' \emph{IEEE Communications Letters}, vol.~20, no.~6, pp.
  1259--1262, Jun. 2016.

\bibitem{venugopal-etal2017channel}
K.~Venugopal, A.~Alkhateeb, N.~G. Prelcic, and R.~W. Heath, ``Channel
  estimation for hybrid architecture-based wideband millimeter wave systems,''
  \emph{{IEEE} J. Sel. Areas Commun.}, vol.~35, no.~9, pp. 1996--2009, Sep.
  2017.

\bibitem{zeng-etal2014electromagnetic}
Y.~Zeng, R.~Zhang, and Z.~N. Chen, ``Electromagnetic lens-focusing antenna
  enabled massive {MIMO}: Performance improvement and cost reduction,''
  \emph{{IEEE} J. Sel. Areas Commun.}, vol.~32, no.~6, pp. 1194--1206, June
  2014.

\bibitem{zeng-zhang2016millimeter}
Y.~Zeng and R.~Zhang, ``Millimeter wave {MIMO} with lens antenna array: A new
  path division multiplexing paradigm,'' \emph{{IEEE} Trans. Commun.}, vol.~64,
  no.~4, pp. 1557--1571, Apr. 2016.

\bibitem{zeng-zhang2017cost}
------, ``Cost-effective millimeter-wave communications with lens antenna
  array,'' \emph{IEEE Wireless Communications}, vol.~24, no.~4, pp. 81--87,
  2017.

\bibitem{zeng-etal2016multiuser}
Y.~Zeng, L.~Yang, and R.~Zhang, ``Multi-user millimeter wave {MIMO} with
  full-dimensional lens antenna array,'' \emph{{IEEE} Trans. Wireless Commun.},
  vol.~17, no.~4, pp. 2800--2814, Apr. 2018.

\bibitem{kwon-etal2016rf}
T.~Kwon, Y.~G. Lim, B.~W. Min, and C.~B. Chae, ``{RF} lens-embedded massive
  {MIMO} systems: Fabrication issues and codebook design,'' \emph{{IEEE} Trans.
  Microw. Theory Tech.}, vol.~64, no.~7, pp. 2256--2271, July 2016.

\bibitem{yang-etal2017channel}
L.~Yang, Y.~Zeng, and R.~Zhang, ``Channel estimation for millimeter-wave {MIMO}
  communications with lens antenna arrays,'' \emph{{IEEE} Trans. Veh.
  Technol.}, vol.~67, no.~4, pp. 3239--3251, Apr. 2018.

\bibitem{brady-etal2013beamspace}
J.~Brady, N.~Behdad, and A.~M. Sayeed, ``Beamspace {MIMO} for millimeter-wave
  communications: System architecture, modeling, analysis, and measurements,''
  \emph{{IEEE} Trans. Antennas Propag.}, vol.~61, no.~7, pp. 3814--3827, Jul.
  2013.

\bibitem{gao-etal2017reliable}
X.~Gao, L.~Dai, S.~Han, C.~L. I, and X.~Wang, ``Reliable beamspace channel
  estimation for millimeter-wave massive {MIMO} systems with lens antenna
  array,'' \emph{IEEE Transactions on Wireless Communications}, vol.~16, no.~9,
  pp. 6010--6021, Sept 2017.

\bibitem{narasimhamurthy-tepedelenlioglu2009antenna}
A.~B. Narasimhamurthy and C.~Tepedelenlioglu, ``Antenna selection for
  {MIMO-OFDM} systems with channel estimation error,'' \emph{{IEEE} Trans. Veh.
  Technol.}, vol.~58, no.~5, pp. 2269--2278, Jun. 2009.

\bibitem{tse2005fundamentals}
D.~Tse and P.~Viswanath, \emph{Fundamentals of wireless communication}.\hskip
  1em plus 0.5em minus 0.4em\relax Cambridge University Press, 2005.

\bibitem{kay1993fundamentals}
S.~M. Kay, \emph{Fundamentals of Statistical Signal Processing: Estimation
  Theory}.\hskip 1em plus 0.5em minus 0.4em\relax Upper Saddle River, NJ, USA:
  Prentice-Hall, Inc., 1993.

\bibitem{popovic1992generalized}
B.~M. Popovic, ``Generalized chirp-like polyphase sequences with optimum
  correlation properties,'' \emph{{IEEE} Trans. Inf. Theory}, vol.~38, no.~4,
  pp. 1406--1409, Jul. 1992.

\bibitem{sesia2009lte}
S.~Sesia, I.~Toufik, and M.~Baker, \emph{LTE, The UMTS Long Term Evolution:
  From Theory to Practice}.\hskip 1em plus 0.5em minus 0.4em\relax Wiley
  Publishing, 2009.

\bibitem{chi-etal2011training}
Y.~Chi, A.~Gomaa, N.~Al-Dhahir, and A.~R. Calderbank, ``Training signal design
  and tradeoffs for spectrally-efficient multi-user {MIMO-OFDM} systems,''
  \emph{{IEEE} Trans. Wireless Commun.}, vol.~10, no.~7, pp. 2234--2245, Jul.
  2011.

\bibitem{samimi-rappaport2014ultra}
M.~K. Samimi and T.~S. Rappaport, ``Ultra-wideband statistical channel model
  for non line of sight millimeter-wave urban channels,'' in \emph{Proc.\ {IEEE
  GLOBECOM}}, Dec. 2014, pp. 3483--3489.

\end{thebibliography}
\end{document}